\newtheorem{theorem}{Theorem}
\newtheorem{corollary}{Corollary}
\newtheorem{definition}{Definition}
\newtheorem{example}[theorem]{Example}
\newtheorem{lemma}{Lemma}
\newtheorem{proposition}{Proposition}
\newcommand{\noi}{\noindent}
\newcommand{\esp}{\mathbb{E}}
\newcommand{\ul}[1]{\underline{#1}}
\newcommand{\ol}[1]{\overline{#1}}
\def\EE{\mathbb{E}}
\def\NN{\mathbb{N}}
\newcommand{\ind}[1]{\mathds{1}_{#1} }
\title{Markets for Models}
\author{Krishna Dasaratha\thanks{Department of Economics, Boston University. email: {\tt krishnadasaratha@gmail.com}} \\ Juan Ortner\thanks{Department of Economics, Boston University. email: {\tt jortner@bu.edu}} \\ Chengyang Zhu\thanks{Department of Economics, Boston University. email: {\tt zhuc@bu.edu}}}
\date{\today}
\begin{document}

\maketitle
\begin{abstract}
\begin{singlespacing}
Motivated by the prevalence of prediction problems in the economy, we study markets in which firms sell models to a consumer to help improve their prediction. Firms decide whether to enter, choose models to train on their data, and set prices. The consumer can purchase multiple models and use a weighted average of the models bought. Market outcomes can be expressed in terms of the \emph{bias-variance decompositions} of the models that firms sell. We give conditions when symmetric firms will choose different modeling techniques, e.g., each using only a subset of available covariates. We also show firms can choose inefficiently biased models or inefficiently costly models to deter entry by competitors. \\

{\sc{Keywords:}} prediction, models, competition, mean squared error, bias-variance decomposition. 

\end{singlespacing}
\end{abstract}

{\normalsize{}\thispagestyle{empty}
\setcounter{page}{0}}{\normalsize\par}
\pagebreak 
 
\section{Introduction}

Prediction problems are ubiquitous in the economy. To give a few examples, firms selling products often want to predict customers' willingness to pay and may use business analytics tools to do so. Banks want to assess the credit risk of  borrowers, and rely on predictive models to guide their decisions. In science and engineering, researchers want to predict the viability of compounds in domains ranging from drug discovery to materials science. Campaigns and observers want to predict elections, and may commission polls to do so.

Making quantitative predictions usually involves collecting relevant data and training statistical models based on that data. While firms, organizations, and individuals can conduct this modeling internally, many rely on external firms for these services. There is considerable interest in understanding competition among firms providing data or artificial intelligence services,\footnote{See, for example, \cite{korinek2025concentrating} and \cite{hagiu2025artificial}, for discussions of competition in these markets.} but much of the recent theoretical work focuses on monopoly settings. In this paper, we provide an economic theory analysis of settings where firms compete to sell prediction models. Our focus is on how market structure and welfare depend on the statistical properties of the models available to firms.

Before describing our analysis in some detail, we describe two high-level findings. First, even identical firms servicing identical consumers will often choose differentiated models. Second, firms can choose biased or excessively expensive models to deter competition or to achieve more favorable divisions of the total surplus. This generates inefficient market structures as well as inefficient choices of models. 

We consider an agent, who we call a consumer, facing a prediction problem. The consumer's utility is the negative of the mean squared error of their prediction (or a fixed outside option). To make predictions, the consumer purchases models from firms. These firms decide whether to enter, choose models to train on their datasets (which we assume are independent across firms), and set prices for the resulting predictions. We take an abstract approach to model choice that can accommodate simple models such as linear or ridge regression but also allows for more complex models. Importantly for our analysis, the consumer can use a single model but can also purchase multiple models and use a weighted average of these models.\footnote{In the context of demand for LLMs, \cite*{fradkin2025demand}  documents a number of popular apps making use of a mix of available models.}

A key part of the setup is what the consumer and competitors know about firms' models, and we assume that model choices are observable but the underlying datasets are not. The content of this assumption is that the consumer makes purchasing decisions based on knowledge of the modeling techniques used by firms and the structure of their datasets but not on information depending on the realized data points. As a consequence, market outcomes given model choices can be expressed in terms of the \emph{bias-variance decompositions} of the models that firms sell. When a single firm enters the market, it can extract all surplus. When multiple firms enter, each is paid the marginal value of combining their model with other models. These marginal values can be expressed in terms of the chosen models' biases, which are deterministic and often well-behaved, and variances, which are easy to relate under our independence assumption. Our modeling choices therefore yield straightforward mappings from the statistical structures of firms' models to outcomes and payoffs, and we use these mappings to explore market structure under various circumstances.

We begin by considering firms making simultaneous choices about whether to enter the market and what model to use. In this setting, efficient outcomes can always be supported as equilibrium outcomes: because firms are paid the marginal value of their models to the consumer, incentives can be aligned. But there can also be inefficient equilibria. Moreover, even under efficient equilibria, the division of surplus  between consumer and firms can depend in nontrivial ways on the underlying statistical properties of the available models. 

To illustrate firms' entry choices, we consider a special case where all firms share the same statistical model (and therefore the main strategic consideration is whether to enter). Pure-strategy equilibria exist, and entry decisions at these equilibria are efficient. Higher variance models induce more entry because noisier models allow scope for competitors to enter. The effect of model variance on consumer surplus is non-monotonic: a higher variance decreases total surplus but increases competition, and we show the consumer attains the highest surplus when only two firms enter the market. Overall, decreasing model noise leads to more concentrated markets, which can actually harm consumers.

Next, we analyze firms' model choices and ask when firms will choose distinct models at equilibrium. We consider a setting with a single consumer and firms with the same sets of possible models, so any differentiation must be driven by complementarities between different modeling techniques. We derive necessary conditions for firms to choose the same model and decompose these conditions into interpretable forces favoring and opposing differentiation. We provide examples of differentiation when firms use linear regression and choose which covariates to include. We show firms can choose distinct subsets of covariates for the regressions at equilibrium---even if each firm could costlessly include all covariates in the true model. This corresponds to different companies offering predictive models based on different types of data.

Finally, we consider a setting with an incumbent firm and a potential challenger and describe strategies the incumbent can use to deter entry. The incumbent chooses a model, and the challenger then decides whether to enter the market and choose a model. First, we show that the incumbent can bias their model to deter entry. Intuitively, if available modeling techniques share common biases, an incumbent firm can have little incentive to correct those biases. Second, we show that the incumbent can overinvest in reducing model variance. A natural interpretation is the incumbent firm acquiring a large dataset, e.g., artifical intelligence firms building `data moats' to protect market position. Both behaviors lead to both inefficient entry and inefficient model choices. The equilibrium markets are monopolies although competitive markets would generate more surplus, and the incumbent chooses a more biased or more expensive model than would be optimal given a monopoly.


The rest of this paper is organized as follows. The remainder of this section reviews the existing literature relevant to this paper.  \Cref{s:Setup} formally describes our setup. \Cref{s:simultaneous} studies the case where firms enter simultaneously, including analyzing when equilibrium features model differentiation. \Cref{s:sequential} looks at the problem with sequential entry and analyzes strategies to deter entry. \Cref{s:conclusion} concludes. All omitted proofs are deferred to \Cref{proofs}.

\paragraph{Related Literature}

Research on the economics of providing predictions, models, and forecasts spans fields including economics, finance, and computer science. Like this paper, some recent computer science work has considered agents or firms competing to provide predictions (e.g., \cite{ben2017best}, \cite{ben2019regression}, \cite*{feng2022biasvariancegames}, \cite*{jagadeesan2024improved}, and \cite*{jagadeesan2024safety}). A bit further afield, recent research considers agents competing to provide algorithms to consumers in other domains, including ranking alternatives (e.g., \cite*{immorlica2011dueling} and \cite{kleinberg2021algorithmic}) and reinforcement learning (e.g., \cite*{aridor2020competing}).

We highlight three contributions relative to other work on markets for prediction models. First, we provide a microfounded framework in which equilibrium outcomes depend straightforwardly on the bias-variance decompositions of firms' chosen models. Relatedly, \cite{feng2022biasvariancegames} take a class of models with a bias-variance tradeoff as a primitive in a contest model. They find competition pushes toward higher variance and lower bias models. Second, we ask what market structures emerge in a variety of settings with endogenous entry. Perhaps closest, \cite{jagadeesan2024safety} also consider entry in a model focused on regulatory constraints. Third, we allow consumers the possibility of purchasing models from several firms and find this can lead to quite different market structures.

Several strands of research in economics and finance consider agents competing to make predictions or forecasts, with a focus on contests and related games. \cite*{montiel2022competing} consider auctions when bidders use subjective models and evaluate prediction error using those models (and not as correctly specified Bayesians).\footnote{Several recent papers study settings in which decision makers' beliefs are derived from (potentially) misspecified models \cite*[e.g.,][]{spiegler2016bayesian, eliaz2020model, levy2022misspecified}. Relatedly, \cite*{izzo2023ideological} and \cite*{montiel2025competing} study settings in which political parties compete by providing statistical models that help explain past data.} \cite{ottaviani2006strategy}, \cite{ottaviani2007aggregation}, and a subsequent financial economics literature consider forecasters in contests and in cheap talk games. We instead focus on firms selling models to consumers and ask about the resulting market structure. A common thread is that there can be strategic reasons to provide biased models. We find the relevant strategic forces are different in markets than in contests, however, e.g., choosing biased models to deter entry.

Our work also relates to two rapidly growing strands of literature: the literature that studies markets for information \citep*{bergemann2018design, bergemann2019markets, yang2022selling, galperti2024value}, and the literature that studies the economics of large language models \citep*{duetting2024mechanismdesignlargelanguage, mahmood2024pricingcompetitiongenerativeai, kumar2025pricing, bergemann2025economicslargelanguagemodels}. Our contribution relative to these strands of literature is to propose a framework to study competition among model providers.  

Lastly, our work relates to another relevant literature that considers sender-receiver games with multiple senders \citep{milgrom1986relying, gentzkow2016competition, gentzkow2017bayesian, li2021sequential}. These papers focus on settings in which the incentives of senders and receiver are misaligned, and study how competition influences the information that senders disclose in equilibrium. Instead, we consider a setting in which the incentives of model providers and the consumer are aligned, and focus on how market structure depends on the statistical properties of the models available to firms.  

\section{Basic Setup}\label{s:Setup}

There are $N\in \NN$ firms and a consumer. The consumer's problem is to predict 
$$y= f(x) + \epsilon$$
where $f: \mathbb{R}^k \rightarrow \mathbb{R}$ is measurable and $\epsilon$ is an i.i.d., mean zero error term with variance $\sigma^2$. The distribution of the noise term $\epsilon$ and the distribution $G(\cdot)$ of the vector of covariates $x$ are common knowledge among firms and the consumer. The function $f$ is drawn from a common knowledge distribution $F(\cdot)$.

Each firm $i$ chooses a model and then trains this model on an independent dataset. A dataset $D^{(i)}$ is a finite set of $n$ independent data points. Data points are $(x_j,y_j) \in \mathbb{R}^{k+1}$, where $x_j$ are drawn independently from distribution $G(\cdot)$, and $y_j = f(x_j) + \epsilon_j$. We call the set of possible datasets $\mathcal{D}$.\footnote{In Section \ref{s:simultaneous} we briefly discuss how our results generalize when firms have correlated datasets.}

A model is a measurable function $M: \mathbb{R}^k \times \mathcal{D}  \rightarrow \mathbb{R}$. Given a dataset $D \in \mathcal{D}$, each model $M$ defines an estimator $\hat{f}_D: \mathbb{R}^k \rightarrow \mathbb{R}$ by $\hat{f}_D(x) = M(x,D)$. The consumer can buy at most one model from each firm in the market to help with her prediction.


We study two versions of our game: one in which firms choose models simultaneously (Section \ref{s:simultaneous}) and one in which firms choose models sequentially (Section \ref{s:sequential}). In both versions,  each firm $i$ chooses an action $M_i \in \mathcal M_i \cup \emptyset$, where $\mathcal{M}_i$ is the set of available models to firm $i$ and $\emptyset$ denotes not entering the market. We allow entry costs to vary with the model that firms choose: the cost of choosing model $M \in M_i$ is $c(M)$, with $c:\bigcup_i \mathcal M_i \rightarrow \mathbb R_{++}$, while the cost of choosing action $\emptyset$ is 0.

After firms choose their models, all firms publicly observe the model choices of all their competitors, and simultaneously set prices $(p_i)$. 

The consumer observes the models and prices of the firms and decides which models to buy.   
Call the set of models purchased by the consumer $P \subseteq \{1,2,\hdots,N\}$. To form a prediction, the consumer combines the models she bought, and we assume that the consumer does so by choosing non-negative weights $w^P \in \{w \in \mathbb R_+^{|P|} : \sum_{i \in P} w_i = 1\}$ for the models in $P$. 
In the interim stage, each firm $i$ privately observes its dataset $D^{(i)}$, and the consumer learns the vector $x$. Each firm $i \in P$ gives the consumer their prediction $\hat f_{i,D^{(i)}}(x) = M_i(x,D^{(i)})$. The consumer's prediction is then
$$\phi(\{\hat{f}_{i,D^{(i)}}(x)\}_{i \in P}) := \sum_{i\in P} w_i^P \hat{f}_{i, D^{(i)}}(x).$$

The consumer's payoff is equal to the negative of the mean squared error in her prediction, minus the prices of the models she bought. Hence, the consumer's expected payoff from buying models in $P$ and using prediction $\phi(\{\hat{f}_{i,D^{(i)}}(x)\}_{i \in P})$ is
$$-\esp_{f,D,x,\epsilon} \left[\left(\phi(\{\hat{f}_{i,D^{(i)}}(x)\}_{i \in P}) - y \right)^2 \right] - \sum_{i \in P} p_i.$$
We assume that the consumer's beliefs about the true model $f$, datasets $D = (D^{(j)})_{j=1}^N$, vector $x$ and noise $\epsilon$ are equal to the prior for any models that the firms choose; that is, the consumer doesn't update her beliefs after observing firms' model choices. We denote by $\ul u<0$ consumer's utility from taking the outside option, which represents the expected payoff from not buying any models and making the prediction on her own.

A firm that enters the market with model $M_i$ earns profits $p_i  \ind{i \in P}  - c(M_i)$, and a firm that does not enter the market earns zero profits. Our solution concept is subgame perfect Nash equilibrium (SPNE).




We now present a preliminary result that will be useful in our analysis and several illustrative examples. This first result is a bias-variance decomposition of the mean squared loss, which is a standard result in statistics and machine learning (see, for example, \cite*{james2023introduction}). 

\begin{lemma}\label{p:bias_variance}
Suppose model $M$ gives estimator $\hat{f}_D(x)$ for each dataset $D$. The mean squared loss can be decomposed as:
$$\mathbb{E}_{f,D,x,\epsilon}[(\hat{f}_D(x)-y)^2] = \EE_{f,x}\left[(\underbrace{\mathbb{E}_{D}[\hat{f}_D(x) \mid f,x]-f(x)}_{\text{bias}})^2 \right]+ \mathbb{E}_{f,x}[\underbrace{\mathbb{E}_D[(\mathbb{E}_{D}[\hat{f}_D(x)\mid f, x]-\hat{f}_D(x))^2]}_{\text{variance}}]+ \sigma^2. $$
\end{lemma}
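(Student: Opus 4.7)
The plan is to carry out the standard add-and-subtract argument for the bias-variance decomposition, being a little careful about what is independent of what given the setup.

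First, I would apply the law of iterated expectations to condition on $(f,x)$, writing
\[
\EE_{f,D,x,\epsilon}[(\hat f_D(x)-y)^2] = \EE_{f,x}\!\left[\EE_{D,\epsilon}[(\hat f_D(x)-y)^2 \mid f,x]\right].
\]
For fixed $(f,x)$, introduce the shorthand $\mu(f,x) := \EE_D[\hat f_D(x)\mid f,x]$, which is deterministic given $(f,x)$, and use $y=f(x)+\epsilon$ to write
\[
\hat f_D(x) - y \;=\; \bigl(\hat f_D(x)-\mu(f,x)\bigr) \;+\; \bigl(\mu(f,x)-f(x)\bigr) \;-\; \epsilon.
\]

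Second, I would square this three-term sum and take $\EE_{D,\epsilon}[\,\cdot \mid f,x]$ of each piece. The three diagonal terms are, respectively, the conditional variance $\EE_D[(\hat f_D(x)-\mu(f,x))^2\mid f,x]$, the squared bias $(\mu(f,x)-f(x))^2$, and $\EE[\epsilon^2]=\sigma^2$, which already match the three summands in the lemma once we take the outer expectation over $(f,x)$. The work is to show the three cross terms all vanish.

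Third, I would dispatch the cross terms one at a time. The term $2(\hat f_D-\mu)(\mu-f(x))$ vanishes in conditional expectation because $(\mu-f(x))$ is $(f,x)$-measurable and $\EE_D[\hat f_D(x)-\mu(f,x)\mid f,x]=0$ by definition of $\mu$. The term $-2(\mu-f(x))\epsilon$ vanishes because the first factor is deterministic given $(f,x)$ and $\EE[\epsilon]=0$. The only slightly delicate term is $-2(\hat f_D-\mu)\epsilon$: here I would invoke the assumption that the test-point noise $\epsilon$ is drawn independently of the dataset $D$ (since the setup specifies that $\epsilon$ is i.i.d.\ and the dataset's noise terms $\epsilon_j$ are separate draws), so conditional on $(f,x)$ the factors $\hat f_D-\mu$ and $\epsilon$ are independent with mean zero, making the expectation zero.

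Finally, I would take the outer expectation over $(f,x)$ and collect the three surviving diagonal pieces to obtain the stated identity. The only step that is not purely mechanical is the independence argument for the $(\hat f_D-\mu)\epsilon$ cross term, but it follows directly from the data-generating process in the setup, so I do not anticipate any real obstacle.
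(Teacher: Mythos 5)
Your proof is correct and is exactly the standard add-and-subtract argument that the paper itself invokes without proof (it cites the result as standard from the statistics literature, e.g.\ \cite{james2023introduction}, and gives no proof in the appendix). Your careful handling of the cross term $-2(\hat f_D-\mu)\epsilon$ via the independence of the test-point noise from the training data is the right way to justify the only non-mechanical step.
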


We now provide several examples of prediction problems and classes of models $\mathcal{M}$, along with the bias-variance decomposition for each.

\begin{example}\label{eg:forecasting}
Suppose data points are signals $y = \theta + \epsilon$ about a Gaussian state $\theta \sim \mathcal{N}(0, \nu^2)$ with Gaussian noise $\epsilon \sim \mathcal{N}(0,\sigma^2)$ This is a special case of our framework where the functions $f = \theta$ are constant and the prior $F(\cdot)$ is Gaussian.

If each firm observes a single data point, this recovers a standard model of forecasting (e.g., \cite{ottaviani2006strategy}). A standard class of models to consider is then $$\mathcal{M} = \{ry: r \in [0,1]\}.$$ This corresponds to taking a weighted average of the prior belief about $\theta$ (which we have normalized to $0$) and the signal $y$. The model $ry$ has bias $(1-r)\theta$ and variance $r^2\sigma^2$.
\end{example}

\begin{example} \label{eg: ols}
Suppose the true models $f(x) = x^T \beta$ are linear. Then a natural model is ordinary least squares (OLS), which corresponds to $\hat{f}_D(x) = x^T\hat{\beta}$ where $\hat{\beta}$ minimizes $$\sum_{(x_j,y_j) \in D} |y_j-x_j^T\hat{\beta}|^2.$$
In the classical regime $k<n-1$, the estimator is $\hat{\beta} = (X^TX)^{-1}X^TY,$ where $X$ is the matrix of covariates and $Y$ is the vector of outcomes in the dataset $D$. This model has zero bias and variance   (conditional on $\beta$ and $x$) equal to $\sigma^2 x^T \mathbb{E}_D [(X^TX)^{-1}] x$.
\end{example}

\begin{example} \label{eg: ridge}
Suppose the true models  $f(x) =x^T \beta$ are linear. Another widely used model is ridge regression,\footnote{For an in-depth analysis of the ridge case see \cite{rizzi2025value}, who considers a setting where the Bayesian optimal model is a ridge regression and analyzes the value of data under this model.} which corresponds to $\hat{f}_D(x) =x^T \hat{\beta}$ where $\hat{\beta}$ minimizes $$\sum_{(x_j,y_j) \in D} |y_j-x_j^T\hat{\beta}|^2 + \lambda\|\hat{\beta}\|_2^2.$$
Increasing the penalization parameter $\lambda$ leads to lower variance but higher bias. The estimator is $$\hat{\beta} = (X^TX+\lambda I)^{-1}X^TY.$$
Writing $X = U\Sigma V^T$ for the singular value decomposition of $X$ (so $U$ and $V$ are orthogonal matrices and $\Sigma$ is a diagonal matrix) and $\sigma_i = \Sigma_{ii}$, a standard calculation shows that the estimator is $$\hat{\beta} = V \mathrm{diag}\left(\frac{\sigma_1^2}{\sigma_1^2+\lambda},\hdots,\frac{\sigma_k^2}{\sigma_k^2+\lambda}\right) V^T\beta +  V \mathrm{diag}\left(\frac{\sigma_1}{\sigma_1^2+\lambda},\hdots,\frac{\sigma_k}{\sigma_k^2+\lambda}\right) U^T(Y- X \beta).$$
If the distribution of covariates $G(\cdot)$ is rotationally invariant, then we can write the squared bias conditional on $x$ and $\beta$ as $$\left(1-\EE_{\sigma_j}\left[\frac{\sigma^2_j}{\sigma^2_j+\lambda}\right]\right)^2 (x^T\beta)^2,$$
where the expectation is taken over the singular values $\sigma_j$ of the random matrix $X$. Thus squared bias is increasing in $\lambda$.


\end{example}

One feature of \Cref{eg:forecasting} and \Cref{eg: ridge} with rotationally invariant covariates is that the choice of model affects the magnitude of biases but not the direction. This property will be useful as a special case throughout and as an assumption for one of our results. We say models have a \emph{common bias direction} if for each model $M \in \bigcup_i \mathcal{M}_i$,
\begin{equation}\label{eq:common_bias}
\mathbb{E}_{D}[\; \hat{f}(x)-f(x)\mid f,x \; ] = \alpha b_0(f,x)
\end{equation}
where $\alpha > 0$ can depend on $M$  but $b_0(f,x)$ does not.

\subsection{Discussion}

Before turning to our analysis we briefly discuss the setup and our assumptions. We study settings where firms sell models rather than directly selling data. We note two reasons why this often occurs. First, consumers may lack resources or domain knowledge to train models internally. Second, because data is non-rival, firms have incentives to protect their raw data to maintain market power \citep{jones2020nonrivalry}.

Relatedly, we allow firms to choose models from a restricted set rather than modeling firms as unrestricted Bayesian agents. This lets our framework accomodate a range of modeling techniques used in practice.

We assume that the consumer makes purchasing decisions based on observing firms' modeling choices but not based on any information about the realized datasets. Our observability assumptions fit well if firms can communicate their modeling techniques but want to keep their datasets proprietary. An alternative approach, which is more challenging to work with in most settings, would be to let consumers calculate prediction errors given realized datasets.

We let consumers combine several models but restrict them to choosing weighted averages of these models. If the consumer could choose  arbitrary functions of the predictions purchased from firms, then they would adjust the predictions to remove biases. We remove such concerns, which would lead to highly sophisticated behavior that we do not see as matching the spirit of our approach. We do let the consumer choose the optimal weights to average predictions across firms.

Finally, we note that a firm's choice of model in our framework can capture both decisions about the modeling technique (formally represented by $M$) and decisions about the structure of its dataset. We focus on the former in our analysis. But for some discussions and examples, we allow the size of firms' datasets to depend on the model that they choose: as part of their entry decision, firms may choose how many data points, or how many covariates, to collect.

\section{Simultaneous Entry}\label{s:simultaneous}

We now study the game we described above, in the case in which all firms choose their models simultaneously. In particular, in  the first stage of the game, each firm $i$ simultaneously chooses an action in $\overline{ \mathcal{M}}_i \equiv \mathcal{M}_i \cup \emptyset$. In the second stage, firms' model choices are publicly observed, and all firms that entered the market simultaneously set prices. The consumer then observes the firms' models and prices, and chooses which models to purchase and their weights. Lastly, all uncertainty is resolved: each firm $i$ privately observes its dataset $D^{(i)}$; and the consumer learns vector $x$, receives prediction $\hat f_{i,D^{(i)}}(x) = M_i(x,D^{(i)})$ from each model $i$ she bought, and combines them according to the chosen weights.

We begin by describing the equilibrium pricing behavior of the firms. Let $\mathbf{M} = (M_i)_{i=1}^N$ be the profile of actions chosen by the firms in the first stage. Let $E(\mathbf{M}) \equiv \{i : M_i \neq \emptyset\}$ be the set of firms that enter the market and $N_E = |E(\mathbf{M})|$ be the number of entrants. 
For any non-empty subset $E' \subset E(\mathbf M)$, let $U(E',\mathbf M)$ denote the negative of the expected square loss from purchasing models $(M_i)_{i \in E'} = (\hat f_{i,D^{(i)}})_{i \in E'}$ and weighting them optimally: 
$$U(E',\mathbf M) = - \mathbb{E}\left[\left( \sum_{j\in E'}w_j^{E'}\hat{f}_{j,D^{(j)}} (x) -y\right)^2\right].$$
We also let $U(\emptyset ,\mathbf M) = \ul u$ be the consumer's payoff from not buying any models. 


\begin{proposition}\label{p:prices1}
Suppose firms choose models $\mathbf{M} = (M_i)_i$. 
Then, the subgame that starts at the pricing stage has a SPNE in which the consumer purchases all models in $E(\mathbf{M})$, with prices satisfying
\begin{align} 
\forall i \in E(\mathbf M), \quad p_i &= \min_{E' \subset E(\mathbf M) \backslash \{i\}} \left[  U( E(\mathbf{M}), \mathbf M) - U(E', \mathbf{M}) - \sum_{\substack{j \in E(\mathbf{M}) \backslash E' \\ j \neq i}} p_j \right]. \label{e:prices}
\end{align}
Moreover, in every SPNE of the pricing subgame under which the consumer buys all models  in $E(\mathbf{M})$, prices must satisfy \eqref{e:prices}. 
\end{proposition}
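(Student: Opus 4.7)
The plan is to analyze the pricing subgame in two stages: first, I will characterize each firm's best response, which simultaneously pins down the functional form (1) and establishes the ``moreover'' necessity claim; second, I will construct a consistent price vector via a fixed-point argument.

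For the characterization, I would fix firm $i$ and hold $(p_j)_{j \neq i}$ fixed. Firm $i$'s profit is $p_i \cdot \mathbf 1\{i \in P^*(\mathbf p)\}$, where the consumer's chosen set $P^*(\mathbf p)$ maximizes $U(P, \mathbf M) - \sum_{j \in P} p_j$ (with $U(\emptyset, \mathbf M) = \underline u$ capturing the outside option). Firm $i$ therefore picks the largest $p_i$ at which $i \in P^*$. The requirement $E(\mathbf M) \in \arg\max$ unpacks into the IR system $U(E(\mathbf M), \mathbf M) - \sum_{j \in E(\mathbf M)} p_j \geq U(E', \mathbf M) - \sum_{j \in E'} p_j$ for every $E' \subsetneq E(\mathbf M)$. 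Restricting to $E' \subseteq E(\mathbf M) \setminus \{i\}$ and rearranging yields the bounds $p_i \leq U(E(\mathbf M), \mathbf M) - U(E', \mathbf M) - \sum_{j \in E(\mathbf M) \setminus E', j \neq i} p_j$ that appear inside the min in (1); firm $i$'s best response is then the minimum of these, with equality forced because a lower price concedes revenue without changing the sale. Since every $E' \subsetneq E(\mathbf M)$ omits some $i \in E(\mathbf M) \setminus E'$, the consumer's entire IR system is captured, and the same argument delivers necessity in any SPNE with $P^* = E(\mathbf M)$.

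For existence, let $F_i(p_{-i})$ denote the right-hand side of (1) and consider $\Phi_i(\mathbf p) = \max\{0, F_i(p_{-i})\}$. Taking $E' = E(\mathbf M) \setminus \{i\}$ in the min gives $F_i \leq \Delta_i := U(E(\mathbf M), \mathbf M) - U(E(\mathbf M) \setminus \{i\}, \mathbf M)$, while monotonicity of $U$ in the purchase set (unwanted models can be weighted zero) gives $\Delta_i \geq 0$. Hence $\Phi$ continuously maps the compact convex box $K = \prod_i [0, \Delta_i]$ into itself, and Brouwer delivers a fixed point $\mathbf p^*$. Provided $F_i(p^*_{-i}) \geq 0$ at $\mathbf p^*$, the fixed-point equation recovers (1); the IR inequalities then force the consumer to weakly prefer $E(\mathbf M)$ to every proper subset and to the outside option (with favorable tie-breaking she purchases $E(\mathbf M)$), and each firm best-responds by construction.

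The main obstacle I anticipate is verifying $F_i(p^*_{-i}) \geq 0$ at the fixed point, since the terms in the min for small $E'$---especially $E' = \emptyset$---can a priori be negative if the other firms' prices exceed the total available surplus. I would close this gap by showing that the fixed-point equations self-enforce a total-price ceiling $\sum_j p^*_j \leq U(E(\mathbf M), \mathbf M) - \underline u$: any violation would require some firm $i$ with $p^*_i > 0$ to have $F_i(p^*_{-i}) < 0$, contradicting $p^*_i = \max\{0, F_i\}$. A similar bootstrapping on the intermediate $E'$ terms, using monotonicity of $U$, pins down $F_i \geq 0$ simultaneously across firms and completes the existence argument.
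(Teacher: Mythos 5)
Your overall strategy (best-response/indifference for the formula and the necessity claim, Brouwer for existence) matches the paper's, and your necessity argument is essentially the paper's second half. But there is a genuine gap exactly where you flag it: showing that the fixed point of your truncated map $\Phi_i(\mathbf p)=\max\{0,F_i(p_{-i})\}$ actually satisfies $F_i(p^*_{-i})\ge 0$, so that $\mathbf p^*$ solves \eqref{e:prices} rather than merely the truncated system. Your proposed ``bootstrapping'' is not an argument, and the truncation makes the needed step harder, not easier: the natural way to bound $F_i$ from below is to use the pricing equation of an auxiliary firm $k\notin E'$ to control $\sum_{j\in E(\mathbf M)\backslash E',\,j\neq i}p^*_j$, and that requires the exact equality $p^*_k=F_k(p^*_{-k})$. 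Under your map you only have $p^*_k=\max\{0,F_k\}\ge F_k$, which is the wrong direction, so the chain of inequalities does not close without further work (e.g.\ some induction over the set of firms with $F_i<0$), and your sketch does not supply it. Also, your claimed contradiction for the ``total-price ceiling'' ($p^*_i>0$ with $F_i<0$) addresses only the aggregate bound, not the intermediate-$E'$ terms where the real difficulty sits.

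The paper resolves this by \emph{not} truncating: it runs Brouwer on the untruncated map $\Psi$ over the enlarged box $[\underline p,\overline p]^{E(\mathbf M)}$ with $\underline p=-(|E(\mathbf M)|-1)\overline p<0$, so the fixed point satisfies \eqref{e:prices} exactly by construction. Nonnegativity is then proved \emph{a posteriori}: for a minimizer $E$ of firm $i$'s expression, either $E=E(\mathbf M)\backslash\{i\}$ (so $p_i$ is a marginal value, hence $\ge 0$ by monotonicity of $U$), or there is $k\neq i$ with $k\notin E$, and plugging the candidate set $E\cup\{i\}$ into firm $k$'s \emph{exact} pricing equation yields $\sum_{j\in E(\mathbf M)\backslash(E\cup\{i\})\cup\{i\}}p_j\le U(E(\mathbf M),\mathbf M)-U(E\cup\{i\},\mathbf M)$, whence $p_i\ge U(E\cup\{i\},\mathbf M)-U(E,\mathbf M)\ge 0$. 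If you adopt the enlarged domain instead of the truncation, your argument goes through with this lemma inserted; as written, the existence half is incomplete.
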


\Cref{p:prices1} characterizes the prices that firms charge when the consumer buys all models in $E(\mathbf M)$. We note that, since firms face strictly positive entry costs, on the equilibrium path of every pure strategy SPNE the consumer buys all models in $E(\mathbf M)$.

To understand equation \eqref{e:prices}, consider first the case in which two or more firms enter the market. The utility that the consumer gets from purchasing all models is 
$$ U( E(\mathbf{M}), \mathbf M) - \sum_{j \in E(\mathbf M)} p_j,$$
while the maximum utility that the consumer can get from not purchasing firm $i$'s model is
$$\max_{E' \subset E(\mathbf M) \backslash \{i\}} \left[U( E', \mathbf M) - \sum_{j \in E'} p_j \right].$$
The consumer is willing to buy firm $i$'s model if
$$\min_{E' \subset E(\mathbf M) \backslash \{i\}} \left[ U( E(\mathbf{M}), \mathbf M) - U(E', \mathbf{M}) - \sum_{\substack{j \in E(\mathbf{M}) \backslash E' \\  j \neq i}} p_j - p_i\right] \geq 0 .$$
The price $p_i$ in \eqref{e:prices} leaves the consumer indifferent between buying all models and not purchasing $i$'s model. 

When there is one entrant, that firm extracts all the surplus from the consumer by charging a price of $U( E(\mathbf{M}), \mathbf M) - U( \emptyset, \mathbf M) =U(E(\mathbf{M}), \mathbf M) - \ul u $.  

In general, the system of equations \eqref{e:prices} defines a fixed-point condition that can have many solutions. We next show that a natural condition simplifies this system considerably.

\begin{definition}\label{a:dmr}
We say that models satisfy \emph{decreasing marginal returns} if, for all $\mathbf M \in \prod_i \overline{\mathcal M}_i$,
$E \subset E(\mathbf M)$, $E' \subset E$ and $j \in E'$
$$U(E',\mathbf M) - U(E' \backslash \{j\},\mathbf M) \geq U(E,\mathbf M) - U(E\backslash \{j\},\mathbf M). $$
\end{definition}
In words, models have decreasing marginal returns if the marginal value for the consumer of buying an additional model is decreasing in the number of models the consumer is already buying. Note that the condition is always satisfied when there are two firms in the market (i.e., $N=2$) and the consumer's outside option is sufficiently negative. In \Cref{s:symmetric}, we show that \Cref{a:dmr} also holds when all firms have access to the same unique model and the consumer's outside option is sufficiently negative. Intuitively, models satisfy decreasing marginal returns unless (i) the outside option is high or (ii) certain sets of models are very complementary because their biases at least partially cancel each other out.

When models satisfy decreasing marginal returns, the pricing subgame has a simple equilibrium. Each firm charges a price equal to the marginal value of their model to the consumer, assuming the consumer purchases all other available models:

\begin{corollary}\label{cor:prices}
Suppose models satisfy decreasing marginal returns. Then, the subgame that starts at the pricing stage has a SPNE in which the consumer purchases all models in $E(\mathbf{M})$, with prices satisfying
\begin{align} 
\forall i \in E(\mathbf M), \quad p_i &= U( E(\mathbf{M}), \mathbf M) - U(E(\mathbf{M}) \backslash \{i\}, \mathbf{M}). \label{e:prices2}
\end{align}
\end{corollary}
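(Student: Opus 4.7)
The plan is to invoke \Cref{p:prices1} directly: the corollary follows from showing that the candidate prices $p_i = U(E(\mathbf M), \mathbf M) - U(E(\mathbf M) \setminus \{i\}, \mathbf M)$ solve the fixed-point system \eqref{e:prices}. Once this is verified, \Cref{p:prices1} guarantees that there is an SPNE of the pricing subgame under which the consumer purchases every model in $E(\mathbf M)$ and prices take the stated form.

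Fix $\mathbf M$ and write $E = E(\mathbf M)$. Substituting the candidate $p_j$ into the right-hand side of \eqref{e:prices}, we must show that for each $i \in E$,
\[
\min_{E' \subset E \setminus \{i\}} \left[ U(E, \mathbf M) - U(E', \mathbf M) - \sum_{j \in (E \setminus \{i\}) \setminus E'} \bigl(U(E,\mathbf M) - U(E \setminus \{j\}, \mathbf M)\bigr) \right] = U(E,\mathbf M) - U(E \setminus \{i\}, \mathbf M).
\]
Taking $E' = E \setminus \{i\}$ makes the sum empty and yields exactly the RHS, so the minimum is at most the RHS. The work is to show the reverse inequality, i.e., that for every $E' \subsetneq E \setminus \{i\}$, writing $S = (E \setminus \{i\}) \setminus E'$, the bracketed expression is at least $U(E,\mathbf M) - U(E \setminus \{i\}, \mathbf M)$. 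After cancelling $U(E,\mathbf M)$, this reduces to the inequality
\[
U(E \setminus \{i\}, \mathbf M) - U(E', \mathbf M) \;\geq\; \sum_{j \in S} \bigl(U(E, \mathbf M) - U(E \setminus \{j\}, \mathbf M)\bigr).
\]

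The main step is to prove this inequality via a telescoping argument using decreasing marginal returns. Enumerate $S = \{j_1, \dots, j_s\}$ in any order and set $A_t = E' \cup \{j_1, \dots, j_{t-1}\}$, so that $A_1 = E'$ and $A_{s+1} = E \setminus \{i\}$. The left-hand side telescopes as
\[
U(E \setminus \{i\}, \mathbf M) - U(E', \mathbf M) = \sum_{t=1}^{s} \bigl[ U(A_t \cup \{j_t\}, \mathbf M) - U(A_t, \mathbf M) \bigr].
\]
For each $t$, \Cref{a:dmr} applied with the pair $A_t \cup \{j_t\} \subseteq E$ and element $j_t$ yields
\[
U(A_t \cup \{j_t\}, \mathbf M) - U(A_t, \mathbf M) \;\geq\; U(E, \mathbf M) - U(E \setminus \{j_t\}, \mathbf M),
\]
since $A_t \cup \{j_t\} \subseteq E \setminus \{i\} \cup \{j_t\} \subseteq E$. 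Summing over $t$ delivers the required inequality, confirming that the candidate prices attain the minimum in \eqref{e:prices} for every $i \in E$ and hence solve the fixed-point system.

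I do not anticipate a serious obstacle: the fixed-point characterization in \Cref{p:prices1} already reduces the proof to an algebraic verification, and the only substantive step is choosing the right telescoping of $E \setminus \{i\}$ into $E'$ so that \Cref{a:dmr} can be applied term-by-term. The subtlety to watch for is the direction of the DMR inequality (marginal value on a smaller set is larger), which is exactly aligned with what we need since $A_t \cup \{j_t\} \subseteq E$ for every $t$.
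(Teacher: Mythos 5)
Your proposal is correct and follows essentially the same route as the paper: both verify that the marginal-contribution prices satisfy the fixed-point system \eqref{e:prices} (so that \Cref{p:prices1} applies) by showing the optimum over $E' \subset E(\mathbf M)\setminus\{i\}$ is attained at $E' = E(\mathbf M)\setminus\{i\}$. The paper establishes this by adding one element $k$ to $E'$ at a time and applying decreasing marginal returns at each step, which is exactly your telescoping chain written as an implicit induction.
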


Going back to the firms' entry decisions, we now show that any efficient action profile is a Nash equilibrium of the simultaneous-move game. For any models $\mathbf M$, let $TS(\mathbf M) = U(E(\mathbf M),\mathbf M) - \sum_{i \in E(\mathbf M)} c(M_i) - \ul u$ denote the total surplus (adjusted for consumer's outside option $\ul u$ so that total equilibrium surplus is nonnegative) when firms choose models $\mathbf M$.   
 
\begin{proposition}\label{p:efficiency} 
Suppose models satisfy decreasing marginal returns, and let $\mathbf{M}^*$ be an action profile that maximizes total surplus $TS(\mathbf M)$. Then, there exists a pure-strategy SPNE in which firms choose actions $\mathbf{M}^*$.   
\end{proposition}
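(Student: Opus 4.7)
The plan is to construct an SPNE in which every firm $i$ plays $M_i^*$ in the first stage and, at every model profile $\mathbf{M}$ reached on or off path, firms play the pricing SPNE identified in \Cref{cor:prices}. This is well-defined because the decreasing marginal returns hypothesis is stated uniformly in $\mathbf{M}\in\prod_i\overline{\mathcal M}_i$, so \Cref{cor:prices} applies at every continuation subgame. Along these continuation strategies, a firm $i$ that enters with model $M_i$ receives the stage payoff $p_i-c(M_i)=U(E(\mathbf{M}),\mathbf{M})-U(E(\mathbf{M})\setminus\{i\},\mathbf{M})-c(M_i)$, while a firm that plays $\emptyset$ earns zero. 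It then only remains to check that no firm has a profitable unilateral first-stage deviation.

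The key observation is that marginal-value pricing turns each firm into a residual claimant on its own contribution to total surplus. Since $U(E(\mathbf{M})\setminus\{i\},\mathbf{M})$ only involves the models of firms other than $i$, it coincides with $U(E((\mathbf{M}_{-i},\emptyset)),(\mathbf{M}_{-i},\emptyset))$. Substituting this into the stage-payoff expression and comparing with the definition of $TS$ yields the uniform identity
$$ u_i(\mathbf{M}) \;=\; TS(\mathbf{M}) - TS(\mathbf{M}_{-i},\emptyset),$$
which also holds trivially when $M_i=\emptyset$ (both sides are then zero).

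Given this identity, the Nash property is immediate: for any firm $i$ and any alternative action $M_i'\in\overline{\mathcal M}_i$,
$$ u_i(\mathbf{M}^*_{-i},M_i') - u_i(\mathbf{M}^*) \;=\; TS(\mathbf{M}^*_{-i},M_i') - TS(\mathbf{M}^*) \;\leq\; 0,$$
because the term $TS(\mathbf{M}^*_{-i},\emptyset)$ is common to both payoffs and $\mathbf{M}^*$ is a global maximizer of $TS$. Hence no firm can profitably deviate, and $\mathbf{M}^*$ is sustained as the first-stage play of a pure-strategy SPNE.

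I expect the only real subtlety to be establishing the residual-claimant identity in a way that handles both kinds of deviations (entering with a new model and exiting the market) through a single expression, since the formula $p_i-c(M_i)$ only makes literal sense when $i\in E(\mathbf{M})$. Once that bookkeeping is done, the efficiency result is essentially automatic: a player whose payoff equals their marginal contribution to total surplus has no incentive to deviate from a profile that maximizes total surplus.
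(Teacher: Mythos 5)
Your proposal is correct and follows essentially the same route as the paper: both arguments use the marginal-value prices from the pricing subgame to establish the residual-claimant identity $u_i(\mathbf{M}) = TS(\mathbf{M}) - TS(\mathbf{M}_{-i},\emptyset)$, and then conclude that no unilateral deviation from a surplus-maximizing profile can be profitable. Your explicit handling of the non-entry action $\emptyset$ is a minor bookkeeping point the paper leaves implicit, but the substance is identical.
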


When models have decreasing marginal returns, each firm charges a price equal to its marginal contribution to total surplus (see equation \eqref{e:prices2}). Hence, firms have incentives to choose models that maximize total surplus.    

As a Corollary of \Cref{p:efficiency}, we get the following existence result. 

\begin{corollary}\label{cor:existence}
Suppose models satisfy decreasing marginal returns. Then, the set of pure-strategy SPNE is non-empty if $\displaystyle \arg\max_\mathbf{M} TS(\mathbf{M})$ is non-empty.
\end{corollary}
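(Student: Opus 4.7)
The plan is to observe that this corollary is essentially immediate from \Cref{p:efficiency}, so the proof will be a short deduction rather than a new construction. The only work is to verify that the hypothesis of \Cref{p:efficiency} is met and then package the conclusion in the form the corollary requires.

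First I would invoke the hypothesis that $\arg\max_{\mathbf M} TS(\mathbf M)$ is non-empty and pick any element $\mathbf M^* \in \arg\max_{\mathbf M} TS(\mathbf M)$. Next, because models satisfy decreasing marginal returns by assumption, the hypotheses of \Cref{p:efficiency} are satisfied for this $\mathbf M^*$. Applying \Cref{p:efficiency} directly yields a pure-strategy SPNE in which firms choose the action profile $\mathbf M^*$, which in turn implies that the set of pure-strategy SPNE is non-empty.

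The main (and only) subtlety worth flagging is that the hypothesis ``$\arg\max_{\mathbf M} TS(\mathbf M)$ is non-empty'' is needed precisely because the firms' action sets $\overline{\mathcal M}_i$ are, a priori, arbitrary measurable-function spaces, so existence of a total-surplus maximizer is not automatic; once it is assumed, \Cref{p:efficiency} does all the work. I would not expect any real obstacle here, since \Cref{p:efficiency} already gives both the price profile \eqref{e:prices2} that supports $\mathbf M^*$ as equilibrium play and the argument that no firm has a profitable unilateral deviation from an efficient profile.
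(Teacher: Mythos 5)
Your proof is correct and matches the paper's own reasoning exactly: the corollary is stated as an immediate consequence of \Cref{p:efficiency}, obtained by selecting any maximizer $\mathbf M^* \in \arg\max_{\mathbf M} TS(\mathbf M)$ and applying that proposition under the decreasing-marginal-returns hypothesis. Your remark about why non-emptiness of the argmax must be assumed is a reasonable clarification and does not depart from the paper's approach.
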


For the remainder of \Cref{s:simultaneous}, we focus on settings in which models satisfy decreasing marginal returns and therefore  \Cref{cor:existence} holds. 

\paragraph{Correlated datasets.} While our model assumes that firms' datasets are independent, we note that our results in this section do not depend on this assumption. With correlated datasets, equilibrium prices continue to satisfy equation \eqref{e:prices} (and, if models satisfy
decreasing marginal returns, equation \eqref{e:prices2}). However, as the analysis below shows, the assumption of independent datasets greatly simplifies the analysis by allowing us to compute the consumer's expected utility $U(E,\mathbf M)$ from buying and combining models.

\subsection{Entry with a Common Model}\label{s:symmetric} 

To illustrate the workings of our model, we consider a simple symmetric setting in which all firms have access to one statistical model: for all $i$, $\mathcal M_i = \{M\}$, though each firm $i$ observes an independent dataset $D^{(i)}$. Hence, the only decision of the firms is whether or not to enter. We let $c>0$ denote the cost of model $M$.

Let $B = \EE_{f,x}\left[\left(\mathbb{E}_{D}[\hat{f}_D(x)-f(x)\mid f ,x]\right)^2 \right]$ and
$V = \esp_{f,D,x}\left[\left(\esp_{D} \left [\hat f_D(x)\mid f ,x\right] - \hat f_D(x)\right)^2\right]$
denote, respectively, the expected squared bias and variance of the model $M(x,D) = \hat f_D(x)$  available to all firms. We assume that the number $N$ of potential entrants is large, with $\frac{V}{N(N-1)} <c$. We also assume that the consumer's outside option is low enough so that at least one firm enters the market, which holds when additionally $ \ul u<-B - V - \sigma^2 -c $, and low enough to ensure decreasing marginal returns, which we show in the appendix holds when $\ul u< - B - \frac{3}{2}V - \sigma^2$. Decreasing marginal returns and \Cref{cor:prices} together imply that, in this setting with symmetric firms, the pricing subgame has a SPNE in which firms set prices according to \eqref{e:prices2} and the consumer purchases all models.

We define producer surplus as the sum of firms' profit, and consumer surplus as $U(P,\mathbf{M})-\sum_{i\in P}p_i - \ul u$, that is, the negative of the prediction error using the models bought, minus the price of the models, adjusted for the outside option $\ul u$ so that consumer surplus is nonnegative.


We can use this model to study how improvements in predictions affect entry and consumer surplus. We describe outcomes under pure-strategy equilibrium where firms enter if they are indifferent.\footnote{Generically, all pure-strategy equilibria are payoff equivalent. Equilibrium payoffs are not unique at the measure-zero subset of parameters where a firm is indifferent to entering, and we select an equilibrium where indifferent firms enter to simplify the proposition statement.} The effects of changes in bias are fairly straightforward, and we focus on the more interesting case of changes in variance.
\begin{proposition}\label{p:symmetric_deterministic}
Suppose all firms are symmetric and have access to the same model $M$ with expected variance $V$ and expected squared bias $B$. Then:
\begin{enumerate}
    \item[(i)] The number $N_E$ of entrants increases with $V$.
    \item[(ii)] 
    Consumer surplus attains a maximum at $V = 2c$, with $N_E=2$.
    \item[(iii)] Total surplus is decreasing in $V$.
\end{enumerate}
\end{proposition}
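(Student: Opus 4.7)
The plan is to derive closed-form expressions for the equilibrium utility, prices, consumer surplus, and total surplus as functions of the entry count $N_E$ and the model variance $V$, then analyze each part by checking monotonicity within and across the segments where $N_E$ is constant. First, when a set $E$ of firms has entered with the common model $M$, by symmetry the consumer's optimal weights are uniform $1/|E|$, and because datasets are independent the bias-variance decomposition of \Cref{p:bias_variance} yields
\begin{equation*}
U(E,\mathbf{M}) = -B - V/|E| - \sigma^2.
\end{equation*}
The hypothesis $\ul u < -B - \tfrac{3}{2}V - \sigma^2$ yields decreasing marginal returns by a direct verification of \Cref{a:dmr} at the binding case $|E|=2$, so by \Cref{cor:prices} each firm charges $p = V/(N_E(N_E-1))$ when $N_E \ge 2$, while a monopolist charges $-B - V - \sigma^2 - \ul u$.

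For part (i), in any pure-strategy equilibrium the $N_E$-th entrant must weakly profit and an $(N_E+1)$-th would strictly lose, i.e.\ $c N_E(N_E-1) \le V < c N_E(N_E+1)$, so $N_E(V)$ is a non-decreasing step function of $V$. For part (iii), equilibrium total surplus is $TS(V) = -B - V/N_E - \sigma^2 - N_E c - \ul u$, which is strictly decreasing in $V$ with slope $-1/N_E$ within each segment. At every transition $V = c N_E(N_E+1)$, a direct calculation yields $-B - c(2N_E+1) - \sigma^2 - \ul u$ on both sides, so $TS$ is continuous across transitions and hence strictly decreasing in $V$ globally.

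The more delicate step is part (ii). I would write $CS(V) = -B - V/N_E - V/(N_E-1) - \sigma^2 - \ul u$ when $N_E \ge 2$ and $CS(V) = 0$ when $N_E = 1$. Within each $N_E \ge 2$ segment, $CS$ is strictly decreasing in $V$, so the per-segment maximum is attained at the left endpoint $V = c N_E(N_E-1)$, where $CS$ simplifies to $-B - c(2N_E - 1) - \sigma^2 - \ul u$; this value is strictly decreasing in $N_E$, so among segments with $N_E \ge 2$ the best is $N_E = 2$ at $V = 2c$, yielding $CS = -B - 3c - \sigma^2 - \ul u > 0$ by the hypothesis $\ul u < -B - V - \sigma^2 - c$ at $V = 2c$. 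Since $CS = 0$ on the monopoly segment, the global maximum is attained at $V = 2c$ with $N_E = 2$. The main obstacle is precisely this piecewise analysis: because $N_E$ jumps discretely with $V$, I must compare $CS$ across all segments and verify that no later segment's maximum exceeds the value at $V = 2c$, which reduces to the clean monotonicity in $N_E$ of the per-segment maximum computed above.
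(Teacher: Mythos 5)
Your proposal is correct and follows essentially the same route as the paper: establish decreasing marginal returns from the outside-option bound, apply \Cref{cor:prices} to get $U(E,\mathbf M)=-B-V/|E|-\sigma^2$ and $p=V/(N_E(N_E-1))$, characterize entry by $cN_E(N_E-1)\le V<cN_E(N_E+1)$, and read off the three claims from the resulting surplus formulas. Your explicit piecewise analysis (continuity of $TS$ at transitions, per-segment maxima of $CS$ at left endpoints) just fills in steps the paper leaves as "one can check."
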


Buying additional models reduces the variance of the consumer's prediction because these models are trained on independent data. So when the model variance decreases, firms have a weaker incentive to enter because an additional model decreases the variance of the consumer's prediction by less. Hence, the market becomes more concentrated as models improve. 
Additional entry increases the consumer's surplus by providing more competing models. But there is also a competing effect: conditional on the number of firms in the market, a higher variance decreases consumer surplus. It turns out that consumer surplus is maximized at the smallest level of variance such that $N_E=2$. Total surplus, however, is decreasing in model variance. One way to see this is that entry decisions are efficient in the sense that they maximize total surplus, so adding noise cannot help.

\begin{figure}[!h]
    \centering
    \includegraphics[scale = 0.77]{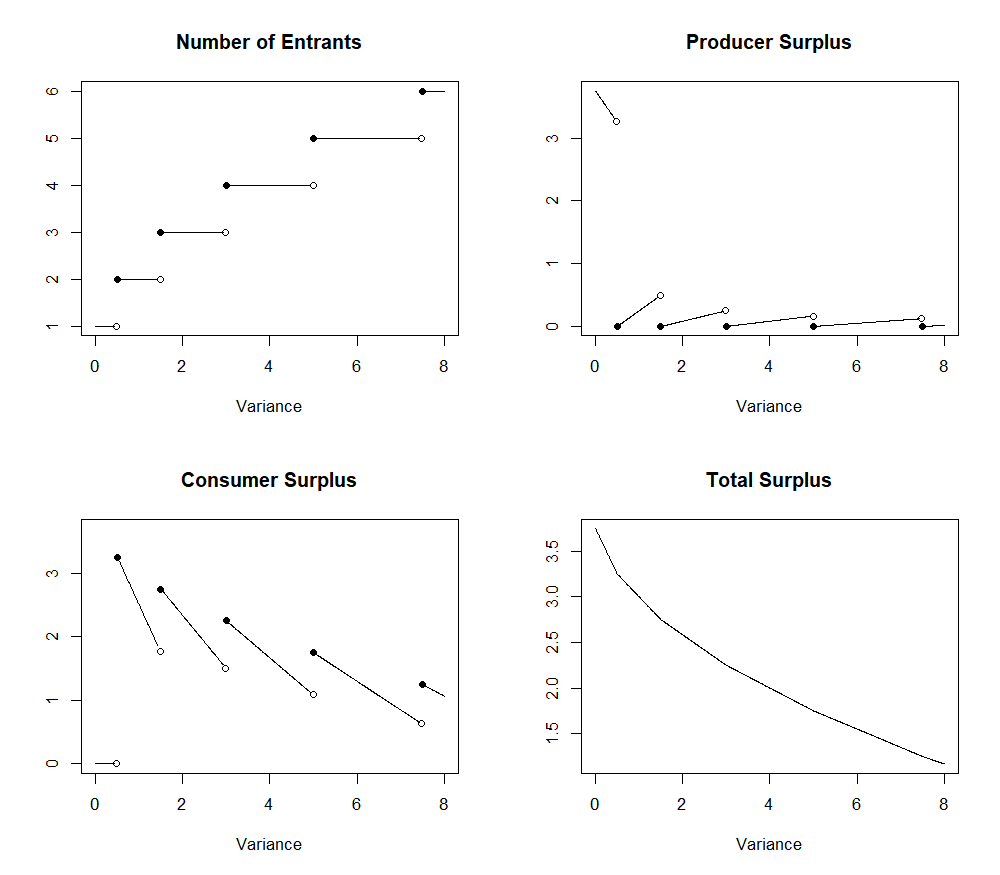}
    \caption{Equilibrium number of firms and surpluses as a function of model variance $V$.}
    \label{fig:unbiased est}
\end{figure}

To illustrate \Cref{p:symmetric_deterministic} numerically, consider the OLS model from \Cref{eg: ols}, which has zero bias. 
\Cref{fig:unbiased est} plots the number of entrants, producer surplus,  consumer surplus, and total surplus, as a function of the variance $V$ of the model available to firms. Parameter values are set to $c=0.25, \underline{u}=-5, \sigma^2=1$. The change in variance could arise, for example, from a change in the number of data points available to firms.

\subsection{Model Differentiation}

Suppose multiple firms enter and choose models from the same set $\mathcal{M}$. We now ask whether these firms will all choose the same model or will select different models. Because firms are symmetric and there is one consumer (or equivalently a population of homogeneous consumers), any differentiation must be driven by complementarities between different modeling choices.

We begin by presenting a general characterization of when firms will choose the same interior model. The characterization relies on a rich set of available models, but illustrates the main forces for and against differentiation. We then discuss several examples of model differentiation. We focus for simplicity on the case of two firms; extending the analysis to allow more firms is straightforward. The following proposition also assumes that all models have cost $c(M) = c$, but can similarly be extended to allow heterogeneous model costs.

Suppose the two firms choose from the same set of models $\mathcal{M}$ and that this set of models contains a one-parameter family  $\{M(t)\}_{t \in (\underline{t} ,\overline{t})} \subset \mathcal{M}$ for some $\underline{t}<\overline{t}$. Let $V(t)$ be the expected variance of $M(t)$. Writing $\hat{f}_{D,t}(x)$ for the estimator defined by $M(t)$, let $ b_t(f,x) = \mathbb{E}_{D}[\hat{f}_t(x) \mid f,x]-f(x)$ be the conditional bias given $f(x)$ and $x$, and let $B(t) = \mathbb{E}_{f,x}[b_t(f,x)^2]$ be the expected squared bias.

A key quantity in the following result is the angle $$\cos^{-1}\left( \frac{\int b_t(f,x)b_{t'}(f,x) dF dG}{\sqrt{B(t)B(t')}}\right)$$
  between the conditional biases of $M(t)$ and $M(t')$. Fixing $t_0$, we write $\theta(t)$ for the angle between $M(t)$ and $M(t_0)$. We note that $\theta(t)$ is identically zero when models have a common bias direction, but can be non-zero more generally.

We assume that the expected variance $V(t)$ and the conditional bias $ b_t(f,x)$ are twice-differentiable functions of $t$. Finally, we continue to assume that the outside option is sufficiently negative for decreasing marginal returns to hold.

\begin{proposition}\label{prop:diff}
Suppose both firms choose model $M(t_0)$ in an equilibrium, where $\underline{t} < t_0 < \overline{t}$. Then
\begin{equation}\label{e:diff_FOC}
    V'(t_0) + 2 B'(t_0) = 0
\end{equation}
and 
\begin{equation}\label{e:diff_SOC}
 -\frac14\,V''(t_0)
    -\frac12\,B''(t_0)
    +B'(t_0)^{2}\left(\frac{1}{8\,B(t_0)} + \frac{1}{4 V(t_0)}\right)
    +\frac{B(t_0)}{2}\theta'(t_0)^{2}
 \leq 0.
 \end{equation}
\end{proposition}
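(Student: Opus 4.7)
The plan is to derive \eqref{e:diff_FOC}--\eqref{e:diff_SOC} from the first- and second-order no-deviation conditions for firm $1$ when it contemplates switching from $M(t_0)$ to $M(t)$ while firm $2$ remains at $M(t_0)$. Under the decreasing marginal returns assumption in force for the section and \Cref{cor:prices}, firm $1$'s equilibrium price equals its marginal contribution,
\[
p_1(t) \;=\; U(\{1,2\},\mathbf M) - U(\{2\},\mathbf M) \;=\; -\mathrm{MSE}^{*}(t) + B(t_0) + V(t_0) + \sigma^{2},
\]
where $\mathrm{MSE}^{*}(t)$ denotes the MSE of the consumer's optimally weighted average of $M(t)$ and $M(t_0)$. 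Since $c(M(t))=c$ is constant across the family, firm $1$'s no-deviation condition reduces to requiring that $t_0$ be a local minimizer of $\mathrm{MSE}^{*}(t)$. (For $t$ near $t_0$, continuity of the marginal contribution keeps firm $1$'s model purchased.)

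Using independence of the datasets to eliminate the variance cross term, the MSE at weight $w$ on $M(t)$ is
\[
g(w,t) \;=\; w^{2} S(t) + (1-w)^{2} S_{0} + 2 w(1-w)\, C(t) + \sigma^{2},
\]
with $S(t)=B(t)+V(t)$, $S_0 = S(t_0)$, and $C(t) = \esp_{f,x}[b_t(f,x)\,b_{t_0}(f,x)]$. The FOC in $w$ gives $w^{*}(t_0)=\tfrac12$. By the envelope theorem, $\tfrac{d}{dt}\mathrm{MSE}^{*}(t) = g_{t}(w^{*}(t),t)$, and using $C'(t_0) = \esp[b'_{t_0} b_{t_0}] = \tfrac12 B'(t_0)$ I get $\tfrac{d}{dt}\mathrm{MSE}^{*}(t_0) = \tfrac14 V'(t_0) + \tfrac12 B'(t_0)$, which yields \eqref{e:diff_FOC}. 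For the SOC I use
\[
\frac{d^{2}\mathrm{MSE}^{*}}{dt^{2}}(t_0) \;=\; g_{tt}(\tfrac12,t_0) \;-\; \frac{g_{wt}(\tfrac12,t_0)^{2}}{g_{ww}(\tfrac12,t_0)} \;\geq\; 0,
\]
with the routine pieces $g_{ww}(\tfrac12,t_0) = 4V(t_0)$, $g_{wt}(\tfrac12,t_0) = S'(t_0)$, and $g_{tt}(\tfrac12,t_0) = \tfrac14 S''(t_0) + \tfrac12 C''(t_0)$.

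The main obstacle is re-expressing $C''(t_0) = \esp[b''_{t_0} b_{t_0}]$ in the statistics that appear in \eqref{e:diff_SOC}. The identity $B''(t_0) = 2\esp[(b'_{t_0})^{2}] + 2\esp[b_{t_0} b''_{t_0}]$ reduces this to computing $\esp[(b'_{t_0})^{2}]$, which is where the angle term enters. I treat $b_t$ as a curve in $L^{2}(F \times G)$ and decompose $b'_{t_0} = \alpha\, b_{t_0} + b^{\perp}$ with $\alpha = B'(t_0)/(2B(t_0))$ and $b^{\perp}\perp b_{t_0}$. Since $\theta(t)$ is the geodesic distance between $b_t/\|b_t\|$ and $b_{t_0}/\|b_{t_0}\|$ on the unit sphere of $L^{2}$, a standard spherical Taylor expansion gives $\theta'(t_0)^{2} = \|b^{\perp}\|^{2}/B(t_0)$, and hence
\[
\esp[(b'_{t_0})^{2}] \;=\; B(t_0)\,\theta'(t_0)^{2} \;+\; \frac{B'(t_0)^{2}}{4B(t_0)}.
\]
Plugging these identities into $g_{tt} - g_{wt}^{2}/g_{ww} \geq 0$ and using \eqref{e:diff_FOC} to replace $V'(t_0) + B'(t_0) = -B'(t_0)$ inside the $g_{wt}^{2}/g_{ww}$ term, the resulting inequality rearranges to exactly \eqref{e:diff_SOC}.
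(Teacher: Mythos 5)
Your proposal is correct and follows essentially the same route as the paper: both derive the first- and second-order no-deviation conditions for firm $1$ perturbing $t$ while firm $2$ stays at $t_0$, use the envelope theorem in the consumer's weight, and your value-function Hessian formula $g_{tt}-g_{wt}^{2}/g_{ww}$ is exactly the paper's explicit computation of the $-w'(t_0)(V'(t_0)+B'(t_0))$ term via implicit differentiation of the weight's first-order condition. The only cosmetic difference is that you recover the $\theta'(t_0)^2$ term from the orthogonal decomposition of $b'_{t_0}$ in $L^2$ rather than by differentiating $\sqrt{B(t)}\cos(\theta(t))$ directly; both yield the same expression for $C''(t_0)$.
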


We obtain the result by simplifying the first-order condition and second-order condition for both firms to choose model $M(t_0)$. One can show firms choose different models by ruling out boundary models and showing the left-hand side of \eqref{e:diff_SOC} is positive whenever the first-order condition holds.

The first-order condition \eqref{e:diff_FOC} is straightforward: if both firms choose model $M(t_0)$, the consumer's prediction error is $$\frac12 V(t_0) +  B(t_0).$$ The expression places more weight on bias than variance because when two firms choose the same model, the biases are identical while the variances are independent.

The left-hand side of the inequality \eqref{e:diff_SOC} can be split into three interpretable terms, which we discuss in turn:
\begin{enumerate}[(1)]
\item $\mathbf{-\frac14\,V''(t_0)
    -\frac12\,B''(t_0)
   }$: These second derivatives measure the curvature of the bias-variance tradeoff. The sign of this term, which can be positive or negative, determines whether the firms prefer both choosing $M(t_0)$ to both choosing $M(t)$ for $t \neq t_0$ nearby. If $\frac14 V(t) + \frac12 B(t)$ is concave, this rules out both firms choosing an interior model. The interesting case is therefore when this term is negative.
   
\item $\mathbf{B'(t_0)^{2}\left(\frac{1}{8\,B(t_0)}+  \frac{1}{4 V(t_0)}\right) }$: The second term, which is always non-negative, measures the benefits to differentiating between bias and variance. Because variances are independent of biases, it can be beneficial for one firm to choose a higher variance model and the other to choose a higher bias model. These benefits are larger when the bias-variance tradeoff is steeper ($B'(t)$ larger).

\item $\mathbf{\frac{B(t_0)}{2}\theta'(t_0)^{2}}$: The third term, which is also always non-negative, measures the benefits from choosing models with biases in different directions. This term is zero when models have a common bias direction, but will be positive if perturbing $t$ changes the angle between the models $M(t)$ and $M(t_0)$. When firms choose the same model, biases will be equal. But if the firms choose differentiated models with non-parallel biases, these biases will contribute less to prediction error.
\end{enumerate}

A key step in the proof uses the envelope formula to simplify the consumer's weights when we calculate firm payoffs. We can do this because the firm and consumer are aligned in their preferences over weights: both want to minimize the overall prediction error.

\medskip

We now show through an example how differentiation can arise in equilibrium. We consider the following setting:
$$y = x^T \beta + \epsilon,$$
where $x \sim N(0,I_k)$, $\epsilon \sim N(0,\sigma^2)$, and $\beta \in \mathbb R^k$. For tractability, we assume that the entries of $\beta$ are drawn i.i.d. from distribution $F_\beta$. 

Each firm observes $n$ independent data points, and the models available to firms are OLS models. In particular, firms choose the subset of covariates to include in their model: models correspond to subsets $2^{\{1,...,k\}}$. Abusing notation, we identify models $M \in \mathcal{M}$ with the corresponding subsets of covariates.

We assume that there are two firms. For $i=1,2$, let $M_i$ denote firm $i$'s model. Firm $i$'s dataset is $D^{(i)} = (Y_i,X_i)$, where $Y_i \in \mathbb R^n$ and $X_i \in \mathbb R^{n \times k}$. We assume $n>k$, so that there are more datapoints than covariates. Let $X_{i,M_i} \in \mathbb R^{n \times |M_i|}$ denote the submatrix of $X_i$ including only the coviarates in $i$'s model $M_i$. Firm $i$'s prediction is $x^T \hat \beta^{(i)}$, where the estimator $\hat \beta^{(i)} \in \mathbb R^k$ is constructed as follows: for covariates $l \in M_i$, $\hat \beta^{(i)}_{l}$ are obtained from the OLS estimator $\hat \beta^{(i)}_{M_i} = (X_{i,M_i}^T X_{i,M_i})^{-1} X_{i,M_i}^T Y_i$. For covariates $l \notin M_i$, we set $\hat \beta^{(i)}_{l} = 0$.

Finally, we consider entry costs that depend on model size: the cost of model $M$ is $c(|M|)$, where $|M|$ is the number of covariates in $M$. We now show that, in this setting, firms may choose models that include different covariates. 

\paragraph{Differentiation with costly covariates.} We start considering a setting in which the number of covariates $k$ is large and additional covariates are costly. Assume that firms' entry cost $c(|M|)$ is strictly increasing and convex, with $c(1)$ small. We now argue that firms never choose the same model when $k$ is large: there is always differentiation in equilibrium. The details are in Appendix \ref{a:diff_lr}, but the result follows from two observations. First, because $c(\cdot)$ is strictly increasing and convex, firms optimally choose models that exclude some covariates when $k$ is large. Second, if both firms choose the same model $M $ that excludes some covariates, a firm strictly profits from swapping a covariate in model $M$ for a covariate that is not included in the model. This deviation makes the biases of firms' models different, reducing consumer's prediction error. An intuition for this is that firms' models are more complementary when they include different covariates; this increased complementarity allows firms to charge higher prices. 

\paragraph{Differentiation with costless covariates.} The argument in the previous paragraph relies on the assumption that entry costs are strictly increasing in model size. This is not needed: differentiation can also arise when all models are equally costly, so additional covariates are free to include. To illustrate this, suppose we increase the number of covariates $k$ and the number of data points $n$ at the same rate. In Appendix \ref{a:diff_lr} we show that for large $k$ and $n$, when the signal-to-noise $\frac{\esp_F[\beta^2]}{\sigma^2}$ is in some intermediate range, in equilibrium firms choose models with different covariates.

These examples illustrate that market equilibria can feature firms selling models based on different parts of a data set. Indeed, this can happen even if all relevant characteristics are available to all firms at no additional cost. A number of real-world markets feature competing firms using different types of data. For example, the credit rating industry includes traditional credit rating models as well as alternative credit rating models emphasizing a range of characteristics not included in traditional models. Our analysis identifies one set of forces that could generate this market structure.

\medskip

In the examples above, differentiation increases the complementarity of firms' models and allows them to charge higher prices. We end this section by noting that this is a general result: differentiation always benefits firms at the expense of the consumer.

\begin{proposition}\label{p:differentiation_surplus}
    Suppose $\mathcal{M}_1=\mathcal{M}_2$ and consider an equilibrium with model choices $M_1 \neq M_2$. There exists $i \in \{1,2\}$ such that the consumer surplus at this equilibrium is less than or equal to consumer surplus in the equilibrium of the pricing subgame after both firms choose $M_i$.
\end{proposition}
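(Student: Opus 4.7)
The plan is to exploit the fact that, under decreasing marginal returns (the standing assumption of \Cref{s:simultaneous}), \Cref{cor:prices} pins down each firm's price as its marginal contribution, which renders consumer surplus algebraically transparent. Introduce the shorthand $U_{12} = U(\{1,2\}, (M_1, M_2))$ for the consumer's utility from buying both differentiated models, $U_{ii} = U(\{1,2\}, (M_i, M_i))$ for the utility when both firms offer $M_i$ (the independent-datasets assumption is what makes the two draws of the same model complementary, through variance reduction), and $U_i$ for the utility from a single model $M_i$. Marginal-contribution pricing then gives
\[
CS_{\text{diff}} = U_{12} - p_1 - p_2 - \underline{u} = U_1 + U_2 - U_{12} - \underline{u}, \qquad CS_i = 2 U_i - U_{ii} - \underline{u}.
\]

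The key step will be to extract a convex-combination inequality from the no-deviation conditions. Since $\mathcal{M}_1 = \mathcal{M}_2$, firm $1$ could imitate firm $2$, landing in the subgame $(M_2, M_2)$, and vice versa. Writing both equilibrium conditions,
\[
U_{12} - U_2 - c(M_1) \geq U_{22} - U_2 - c(M_2), \qquad U_{12} - U_1 - c(M_2) \geq U_{11} - U_1 - c(M_1),
\]
and adding them, the cost terms cancel and I obtain
\[
U_{12} \geq \tfrac{1}{2}(U_{11} + U_{22}).
\]
This is precisely the ``supermodularity across firms'' condition that any sustained-differentiation equilibrium must satisfy: firms can only prefer to sell distinct models if those distinct models are jointly more valuable to the consumer than the average of the two same-model benchmarks.

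To conclude, plug this inequality into the average of the two symmetric consumer surpluses:
\[
\tfrac{1}{2}(CS_1 + CS_2) = U_1 + U_2 - \tfrac{1}{2}(U_{11} + U_{22}) - \underline{u} \geq U_1 + U_2 - U_{12} - \underline{u} = CS_{\text{diff}}.
\]
Hence $\max(CS_1, CS_2) \geq CS_{\text{diff}}$, and the maximizer is the index $i$ required by the proposition.

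The one subtle point to verify is that the marginal-contribution formula of \Cref{cor:prices} applies not only on the equilibrium path but also in the two off-path subgames $(M_1,M_1)$ and $(M_2,M_2)$ that appear in the imitation deviations; both invocations are covered by the blanket DMR assumption maintained throughout this section. Apart from that, the argument reduces to the algebraic identity $CS_{\text{diff}} - \tfrac{1}{2}(CS_1 + CS_2) = \tfrac{1}{2}(U_{11} + U_{22}) - U_{12}$ paired with the averaged incentive constraint, so no further non-trivial step is anticipated.
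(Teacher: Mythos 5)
Your proof is correct and follows essentially the same route as the paper's: express consumer surplus via the marginal-contribution prices of \Cref{cor:prices} and combine the no-imitation deviation constraints. The only cosmetic difference is that you add \emph{both} firms' deviation inequalities to get $U_{12}\geq \tfrac12(U_{11}+U_{22})$ and conclude via the average of $CS_1$ and $CS_2$, whereas the paper orders the firms (WLOG) and uses a single deviation constraint to pin down the specific index $i$; both yield the claimed conclusion.
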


The basic force is simple. The consumer's payoff at the equilibrium is
$$ U(\{1,2\},(M_1,M_2)) - \sum_i p_i =  U(\{1,2\},(M_1,M_2)) - \sum_i ( U(\{1,2\},(M_1,M_2)) - U(\{i\},(M_1,M_2))),$$
and the right-hand side is actually decreasing in the utility $U(\{1,2\},(M_1,M_2))$ from the equilibrium prediction. Because the consumer must pay \emph{each} firm the marginal value of their model, improving the prediction by differentiating the two models (without improving their individual prediction errors) harms the consumer.

\section{Entry Deterrence}\label{s:sequential}

We have so far studied models in which all firms choose whether to enter simultaneously. We now consider a sequential entry game in which an incumbent firm chooses a model and then a competitor decides whether to enter. This can generate several new forces, and we focus on two: the incumbent firm can choose a model with more bias to deter entry and can choose an inefficiently expensive model to deter entry.

There is an incumbent firm (firm 1), a competitor firm (firm 2), and a consumer. The consumer, firms, and informational environment continue to follow the basic setup from \Cref{s:Setup}. The incumbent firm first chooses a model $M_1 \in \mathcal{M}_1$, where $\mathcal{M}_1$ is compact. The competitor then observes this model choice and decides whether to enter. If the competitor enters, they choose a model $M_2 \in \mathcal{M}_2$,  where $\mathcal{M}_2$ is  also compact. We decompose model costs $c(M)$ into a fixed cost $c_f \in \mathbb{R}$ and a model-dependent cost $c_m: \bigcup_{i \in \{1,2\}} \mathcal{M}_i \rightarrow \mathbb{R}_{++}$ so that we can easily vary entry costs. After choosing models, the firm(s) in the market simultaneously choose prices $p_i$ for their model(s). The consumer can purchase one or both of the models or choose the outside option.

So the timing is:
\begin{enumerate}
\item Firm 1 chooses a model $M_1 \in \mathcal{M}_1$.
\item Firm 2 chooses a model $M_2 \in \mathcal{M}_2$ or chooses not to enter.
\item The firm(s) simultaneously set prices $p_i$.
\item The consumer decides which model(s) to purchase.
\end{enumerate}

We assume firm $1$'s model is sold at a positive price in any equilibrium, ruling out some cases where firm $1$'s models have much higher bias than firm $2$'s models. We also assume the outside option is sufficiently low for decreasing marginal returns to hold.

\subsection{Excessive Bias}

We will show that the incumbent firm can choose a biased model to deter entry. The basic force relies on the incumbent and entrant having models with ``similar'' biases. For simplicity, we will assume a common bias direction, and it should be clear that we could obtain the same result in cases with less structure. Recall this means that $$\mathbb{E}_{D}[\; \hat{f}(x)-f(x)\mid f,x \; ] = \alpha b_0(f,x),$$
where the right-hand side only depends on the chosen model through $\alpha$. We call $\alpha$ the bias constant corresponding to model $M$ and write $B_0 = \mathbb{E}_{f,x}[\|b_0(f,x)\|^2_2]$, so that $\alpha^2B_0$ is the expected squared bias.

\Cref{cor:prices} applies: there is a unique equilibrium of the pricing subgame under which both models are purchased, with prices given by
$$p_1 = -\mathbb{E}\left[\left(w_1^{\{1,2\}}\hat{f}_1(x)+w_2^{\{1,2\}}\hat{f}_2(x)-y\right)^2\right]+\mathbb{E}\left[\left(\hat{f}_2(x)-y\right)^2\right],$$
$$p_2 = -\mathbb{E}\left[\left(w_1^{\{1,2\}}\hat{f}_1(x)+w_2^{\{1,2\}}\hat{f}_2(x)-y\right)^2\right]+\mathbb{E}\left[\left(\hat{f}_1(x)-y\right)^2\right].$$
When both of these prices are positive, this is the unique equilibrium of the pricing subgame. When the second firm's price is zero there can be other equilibria where its model is not purchased, but these equilibria are payoff equivalent.

A consequence is that when models have a common bias direction, equilibria only depend on the bias constants $\alpha_i$ and variances $V_i= \mathbb{E}_{f,x,D}[(\mathbb{E}_{D}[\hat{f}_i(x)\mid f,x]-\hat{f}_i(x))^2]$ of models $M_i \in \mathcal{M}_i$. Identifying models with $(\alpha_i,V_i) \in \mathbb{R}^2$, we can define the \emph{Pareto frontier} of $\mathcal{M}_i$ to be the set of models that are not dominated in this space. That is, $M_i \in \mathcal{M}_i$ with bias constant and variance  $(\alpha_i,V_i)$ is contained in the Pareto frontier if there does not exist $M_i' \in \mathcal{M}_i$ with bias constant and variance $(\alpha_i',V_i')$ such that $\alpha_i' \leq \alpha_i$ and $V_i' \leq V_i$ with at least one inequality strict.

We next show that when there is a binding bias-variance tradeoff, there are parameter ranges where firm $1$ will choose a more biased model to deter entry.
\begin{proposition}\label{p:sequential_bias}
    Suppose models have a common bias direction.     Let $M_1^* \in \mathcal{M}_1$ be a model minimizing mean-squared error and $\alpha^*$ be its bias constant, and suppose the boundary of the Pareto frontier is a smooth curve with $M_1^*$ in the interior of this curve. Then there is an open subset of pairs $(c_f,\underline{u})$ such that at equilibrium, firm 1 chooses a model $M_1 \in \mathcal{M}_1$ with $\alpha_1>\alpha^*$ and firm 2 does not enter.
\end{proposition}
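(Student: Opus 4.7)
The plan is to use envelope arguments to show that firm $2$'s maximum net entry value $\pi_2^*(M_1)$ is strictly decreasing in firm $1$'s bias constant $\alpha_1$ when firm $1$ moves along its Pareto frontier at $M_1^*$. Then I would pick $c_f$ just below $\pi_2^*(M_1^*)$ and $\underline{u}$ sufficiently negative so that (i) firm $2$ would enter against $M_1^*$ but not against a slightly-more-biased model, and (ii) firm $1$ strictly prefers to deter rather than accommodate. To set up, by \Cref{cor:prices} and the decreasing-marginal-returns assumption, if firm $2$ enters with $M_2 \in \mathcal{M}_2$ then $p_2 = \mathrm{MSE}(M_1) - \mathrm{MSE}_{\mathrm{opt}}(M_1, M_2)$, so firm $2$'s best net entry value is $\pi_2^*(M_1) = \max_{M_2}[p_2 - c_m(M_2)]$ and firm $2$ enters iff $\pi_2^*(M_1) > c_f$. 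Parameterize $\mathcal{M}_1$'s Pareto frontier locally as $(\alpha, v(\alpha))$; MSE-optimality at the interior point $\alpha^*$ gives the first-order condition $v'(\alpha^*) = -2\alpha^* B_0$.

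The heart of the argument is the following envelope computation. Applying one envelope for the consumer's optimal weight $w^*$ and a second for firm $2$'s best response $\alpha_2^{\mathrm{BR}}$, and then differentiating $\pi_2^*$ along the frontier, I would obtain
\begin{equation*}
\frac{d\pi_2^*}{d\alpha_1} = 2 B_0 \bigl(\alpha_1 - w^* \bar{\alpha}\bigr) + (1 - w^{*2})\, v'(\alpha_1),
\end{equation*}
where $\bar{\alpha} = w^*\alpha_1 + (1-w^*)\alpha_2^{\mathrm{BR}}$ is the combined bias constant. Substituting $v'(\alpha^*) = -2\alpha^* B_0$ at $\alpha_1 = \alpha^*$ and simplifying telescopes everything into
\begin{equation*}
\frac{d\pi_2^*}{d\alpha_1}\bigg|_{\alpha^*} = -2 B_0\, w^*(1-w^*)\, \alpha_2^{\mathrm{BR}}.
\end{equation*}
The common-bias-direction assumption forces $\alpha_2^{\mathrm{BR}} > 0$ for every model in $\mathcal{M}_2$, and $w^* \in (0,1)$ whenever firm $2$ enters and both prices are strictly positive (firm $1$'s positivity is assumed in the preamble, and firm $2$'s is needed for entry to be worthwhile). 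Hence this derivative is strictly negative and $\pi_2^*$ is strictly decreasing along the Pareto frontier at $\alpha^*$.

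Given this, I would fix $\alpha_1^D > \alpha^*$ close enough to $\alpha^*$ that, writing $M_1^D = (\alpha_1^D, v(\alpha_1^D))$, we have $\pi_2^*(M_1^D) < \pi_2^*(M_1^*)$. Choose $c_f$ in the open interval $(\pi_2^*(M_1^D), \pi_2^*(M_1^*))$: firm $2$ strictly prefers to enter against $M_1^*$ and strictly prefers not to enter against $M_1^D$. Firm $1$'s best deterring choice minimizes $\mathrm{MSE}(M_1) + c_m(M_1)$ subject to $\pi_2^*(M_1) \leq c_f$; since the unconstrained minimizer $M_1^*$ violates this constraint and $\pi_2^*$ is strictly decreasing in $\alpha_1$ along the frontier at $\alpha^*$, the constraint binds and the constrained optimum lies at some $\alpha_1^{D\ast\ast} > \alpha^*$ (dominated off-frontier models are ruled out because they have strictly larger $\mathrm{MSE}+c_m$). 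Finally, take $\underline{u}$ negative enough that firm $1$'s monopoly profit at the deterring model, $-\mathrm{MSE}(M_1^{D\ast\ast}) - \underline{u} - c_f - c_m(M_1^{D\ast\ast})$, strictly exceeds its maximum duopoly profit $\max_{M_1}[p_1(M_1, M_2^{\mathrm{BR}}(M_1)) - c_f - c_m(M_1)]$; since $\underline{u}$ enters only the monopoly profit, this can always be arranged for $\underline{u}$ sufficiently negative. All the defining inequalities are strict and continuous in $(c_f, \underline{u})$, yielding an open subset.

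The hard part will be the envelope computation and its clean collapse to $-2B_0 w^*(1-w^*)\alpha_2^{\mathrm{BR}}$: this relies on both envelope theorems being applicable (the consumer's weight is interior because both models are priced positively, and firm $2$'s optimal model is in the interior of its choice space in the $(\alpha_2, V_2)$-sense because common bias direction makes $\alpha_2 > 0$ automatic) and on the MSE first-order condition $2\alpha^* B_0 + v'(\alpha^*) = 0$ cancelling the remaining terms. A subtler concern is verifying that firm $1$'s overall best action is the near-$M_1^*$ deterring model rather than a more-distant deterring model or an accommodating one; this follows from local monotonicity of $\pi_2^*$ combined with continuity of $\mathrm{MSE}+c_m$ along the frontier, together with the choice of a sufficiently negative $\underline{u}$.
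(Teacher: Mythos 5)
Your proposal is correct and follows essentially the same route as the paper: exploit the MSE first-order condition along the Pareto frontier at $M_1^*$ so that increasing $\alpha_1$ leaves $\alpha_1^2 B_0 + V_1$ unchanged to first order and only shrinks the cross-bias term $-2w(1-w)\alpha_1\alpha_2 B_0$, strictly lowering firm 2's entry value, then pick $c_f$ in the resulting gap and $\underline{u}$ low enough that deterrence beats accommodation. The only stylistic difference is that the paper argues the perturbation decreases firm 2's objective pointwise for every feasible $(w,M_2)$ and hence decreases the maximum, whereas you reach the same conclusion via an explicit envelope computation yielding $-2B_0 w^*(1-w^*)\alpha_2^{\mathrm{BR}}<0$.
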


The proposition highlights two inefficiencies. To discuss these, it is easiest to consider the case when there is only a fixed model cost $c(M) =c_f$. First, there is inefficient model choice conditional on entry. The incumbent firm biases their model relative to the optimal monopoly model $M^*$ to maintain their market position. Second, there is inefficient entry because firm 1 deters firm 2 from entering.

The basic idea is that combining two models decreases variance (as averaging noise terms decreases variance) but does not substantially decrease bias under the common bias assumption. The competitor enters when the marginal value they provide to the consumer is high. This marginal value is high when the incumbent chooses a high variance model but low when the incumbent chooses a high bias model. Therefore, the incumbent chooses a high bias model to ensure monopoly profits. The proposition shows this occurs whenever (i) the competitor's cost $c_f$ is in an intermediate range where the incumbent can influence the entry choice, and (ii) the outside option is low enough for deterring entry to be desirable.

As a simple example, suppose the available models for each firm are ridge regressions with penalization parameters $\lambda > 0$ in some compact set. We saw in \Cref{eg: ridge} that this set of models satisfies the common bias assumption when the covariates are rotationally invariant. The proposition then says that for an open subset of parameters $c$ and $\underline{u}$, the incumbent will choose a higher $\lambda$ than would minimize mean squared error and the competitor will not enter. Because all available models bias the coefficients $\hat{\beta}$ toward zero, the incumbent can deter entry by choosing a lower variance model with smaller coefficients.

\subsection{Overinvestment}

Next, we study entry deterrence when firms choose between models with different costs. To do so, we allow the entry cost to depend on the model choice. We find that the incumbent firm can choose a model that is too expensive to deter entry.

Suppose that the conditional biases $\mathbb{E}_{D}[\hat{f}_D(x) \mid f,x]-f(x)$ are equal for all models $\mathcal{M}_1$ and that the cost $c_m(M)$ is a decreasing and continuous function of the expected variance $$V=\mathbb{E}_{f,D,x}[{(\mathbb{E}_{D}[\hat{f}_D(x)\mid f, x]-\hat{f}_D(x))^2}].$$ One example is linear regression (which has zero bias), with different models corresponding to different dataset sizes. Collecting more data points is more costly but decreases variance.

When the set of models available to firm $1$ has the structure described in the previous paragraph, firm $1$ can choose an inefficiently costly model to deter entry:
\begin{proposition}\label{p:sequential_cost}
    Let $M_1^* \in \mathcal{M}_1$ be an optimal choice of model for a monopoly incumbent. Suppose there exists a model $M_1'\in \mathcal{M}_1$ with cost $c(M_1')$ greater than $c(M_1^*)$. Then there is an open subset of pairs $(c_f,\underline{u})$ such that at equilibrium, firm 1 chooses a model $M_1 \in \mathcal{M}_1$ with $c(M_1)>c(M_1^*)$ and firm 2 does not enter.
\end{proposition}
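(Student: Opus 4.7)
The plan is to mirror the entry-deterrence argument of \Cref{p:sequential_bias}, replacing manipulation of the bias constant with manipulation of variance. Because all models in $\mathcal M_1$ share the same conditional bias $b(f,x)$ (and I would assume this holds for $\mathcal M_2$ as well, as in the companion proposition), whenever firm 2 enters the two predictions have identical bias and independent noise with variances $V_1, V_2$. Hence the consumer's optimal weights are proportional to $1/V_i$, giving
\begin{equation*}
U(\{1,2\},\mathbf M) = -B - \sigma^2 - \tfrac{V_1 V_2}{V_1+V_2}, \qquad U(\{i\},\mathbf M) = -B - \sigma^2 - V_i,
\end{equation*}
and by \Cref{cor:prices} the equilibrium duopoly prices are $p_1=\tfrac{V_2^2}{V_1+V_2}$ and $p_2=\tfrac{V_1^2}{V_1+V_2}$.

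Next, I would define firm 2's maximum gross profit from entering as
\begin{equation*}
\tilde\pi_2(V_1) \equiv \max_{M_2\in\mathcal M_2}\left[\tfrac{V_1^2}{V_1+V(M_2)}-c_m(M_2)\right],
\end{equation*}
which is continuous in $V_1$ (by compactness of $\mathcal M_2$ and Berge's theorem) and strictly increasing in $V_1$ (direct differentiation gives a positive derivative for every feasible $M_2$). Firm 2 enters if and only if $\tilde\pi_2(V_1) > c_f$. By hypothesis there exists $M_1'\in\mathcal M_1$ with $c(M_1')>c(M_1^*)$ and, because $c_m$ is strictly decreasing in $V$, $V_1'\equiv V(M_1')<V^*\equiv V(M_1^*)$. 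I would then fix $c_f$ in the open interval $(\tilde\pi_2(V_1'),\tilde\pi_2(V^*))$, so that firm 2 enters against $M_1^*$ but is deterred by any model with variance at most $V_1'$.

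With entry deterrence feasible, I would compare firm 1's payoffs. The deterrence payoff at $M_1'$ is $\pi_1^{mon}(M_1')=-B-\sigma^2-V_1'-c(M_1')-\underline u$, linear and unbounded as $\underline u\to -\infty$. Firm 1's alternative strategies are either (a) selecting some other entry-deterring model in $\mathcal M_1$, all of which have cost strictly above $c(M_1^*)$ since they satisfy $V(M_1)\leq \bar V_1(c_f)<V^*$; or (b) inducing duopoly, in which case firm 1 earns at most $\sup_{V_1,V_2}\bigl[\tfrac{V_2^2}{V_1+V_2}-c(M_1)\bigr]$, a quantity bounded independently of $\underline u$. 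Taking $\underline u$ sufficiently negative ensures the best response lies in (a), yielding the required equilibrium. The set of $(c_f,\underline u)$ for which all these strict inequalities hold is open by continuity of $\tilde\pi_2$ and linearity in $\underline u$.

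The main obstacle is verifying that $\tilde\pi_2$ is strictly increasing on a neighborhood of $V^*$, so that the interval $(\tilde\pi_2(V_1'),\tilde\pi_2(V^*))$ is nonempty, and checking that driving $\underline u$ to $-\infty$ preserves the decreasing-marginal-returns condition under which \Cref{cor:prices} applies. The former is automatic from the derivative computation above; the latter holds because the relevant constraint is one-sided in $\underline u$ (requiring $\underline u$ to be small enough) and therefore becomes easier to satisfy as $\underline u$ decreases, consistent with the blanket assumption in the setup.
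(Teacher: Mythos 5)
Your argument has the same skeleton as the paper's proof: firm 2's gross payoff from entry is increasing in the incumbent's variance $V_1$, so entry is governed by a threshold in $V_1$; one then picks $c_f$ so that the threshold separates $V(M_1^*)$ from $V(M_1')$ (which satisfies $V(M_1')<V(M_1^*)$ because $c_m$ is decreasing in variance), and sends $\underline u\to-\infty$ so that the unbounded monopoly payoff dominates the bounded duopoly payoff, forcing firm 1 to deter. That part is fine, as is your observation that lowering $\underline u$ only helps the decreasing-marginal-returns condition.

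The genuine gap is your added assumption that all models in $\mathcal M_2$ share the common conditional bias of $\mathcal M_1$. The proposition (unlike \Cref{p:sequential_bias}) restricts only the biases within $\mathcal M_1$; firm 2's models may have arbitrary, heterogeneous biases. Your closed forms --- weights proportional to $1/V_i$, $U(\{1,2\})=-B-\sigma^2-\tfrac{V_1V_2}{V_1+V_2}$, and $p_2=\tfrac{V_1^2}{V_1+V_2}$ --- all rely on the two predictions having identical bias, so as written your proof establishes only a special case. The paper avoids this by writing firm 2's entry payoff as $\max_{w,M_2}\left[\varphi(w,M_2)+(1-w^2)V_1-c_m(M_2)-c_f\right]$, where $\varphi$ absorbs every term involving $M_2$'s bias and variance and the cross term with the (fixed) common bias of $\mathcal M_1$; the only dependence on $M_1$ is the additive term $(1-w^2)V_1$, which is nondecreasing in $V_1$ for each $(w,M_2)$, so the maximized payoff is increasing in $V_1$ without any assumption on $\mathcal M_2$'s biases. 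Your threshold argument goes through verbatim once $\tilde\pi_2$ is defined this way. A second, smaller point: your claim that every entry-deterring model has cost \emph{strictly} above $c(M_1^*)$ does not follow from $V(M_1)<V(M_1^*)$ and $c_m$ decreasing alone (that gives only a weak inequality); it follows from the monopoly-optimality of $M_1^*$, since a model with $V_1<V(M_1^*)$ and $c(M_1)\le c(M_1^*)$ would yield strictly higher monopoly profit, a contradiction.
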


The proposition shows that incumbent firms can overinvest in modeling costs to maintain market power. As in \Cref{p:sequential_bias}, this happens when entry costs are intermediate and the outside option is low. If we interpret the model cost as the cost of collecting data, this overinvestment corresponds to practices often termed building a `data moat' in the context of artificial intelligence. The proposition illustrates potential inefficiencies associated with these practices.

An intuition is that investing to reduce variance makes entry less profitable for the competitor, and the incumbent will make such an investment if doing so will deter entry. As in \Cref{p:sequential_bias}, the result implies two inefficiencies. First, entry is inefficiently low at the equilibrium described in the result. Second, the firm spends an inefficiently high amount on data even assuming a monopoly: the firm purchases data with cost higher than its value to the consumer.

\medskip

Propositions \ref{p:sequential_bias} and \ref{p:sequential_cost} show how an incumbent can deter entry by an incumbent by choosing an inefficient model, giving rise to inefficiently concentrated market structures. This is in spite of the fact that our framework favors competition and entry by allowing the consumer to buy all models available to her.

\section{Conclusion}\label{s:conclusion}

This paper proposed a theoretical framework to study how firms that sell prediction models compete in the market. We showed that market outcomes can be expressed in terms of the bias-variance decompositions of the models that firms sell. This tractable characterization of market outcomes allowed us to study how market structure depends on the statistical properties of the models available to firms.    

Our analysis delivers several key insights. First, symmetric firms can choose distinct models that are complementary, which allows them to charge higher prices. 
Second, firms may choose inefficiently biased or costly models to deter entry by competitors or to achieve higher profits.

Our microfoundation for model choice has potential applications beyond the setup in this paper. Market outcomes would continue to depend on the bias-variance decomposition of the models that firms sell in frameworks that need not have all the features studied here (e.g., if consumers only purchase a single model). One natural direction is allowing heterogeneous consumers, which would give another reason for model differentiation. Another is considering objectives or welfare functions depending on more than mean-squared prediction error. As one example, if the agent we term the `consumer' is itself a firm predicting willingness-to-pay and setting prices, better predictions may not always be desirable.


\begin{singlespacing}

\bibliography{prediction}
\end{singlespacing}

\appendix

\section{Proofs} \label{proofs}
\subsection{Proof of \Cref{p:prices1}}
\begin{proof}
We start by showing that equation \eqref{e:prices} has a solution with non-negative prices. Fix $\mathbf M$, and let $\Psi(p): \mathbb R^{E(\mathbf M)} \rightarrow \mathbb R^{E(\mathbf M)}$ be defined by
$$\Psi_i(p) = \min_{E' \subset E(\mathbf M) \backslash \{i\}}  U( E(\mathbf{M}), \mathbf M) - U(E', \mathbf{M}) - \sum_{\substack{j \in E(\mathbf{M}) \backslash E' \\  j \neq i}} p_j$$
for each $i \in E(\mathbf M)$. Define 
\begin{align*}
\overline  p & \equiv \max_{j \in E(\mathbf M)  } U( E(\mathbf{M}), \mathbf M) - U(E(\mathbf{M} ) \backslash \{j\}, \mathbf{M})\\
\underline p & \equiv - (|E(\mathbf M)  -1|) \overline p .
\end{align*}
Note that $\Psi$ is continuous, and maps prices in $[\underline p,\overline  p]^{E(\mathbf M) }$ to prices in $[\underline p,\overline  p]^{E(\mathbf M) }$. Indeed, for every $p \in [\underline p,\overline  p]^{E(\mathbf M) }$ and every $i \in E(\mathbf M)$, 
$$\Psi_i(p) \leq U( E(\mathbf{M}), \mathbf M) - U(E(\mathbf{M} ) \backslash \{i\}, \mathbf{M}) \leq \ol p,$$
and 
$$\Psi_i(p) \geq - \sum_{j \neq i} p_j \geq \ul p,$$
where the last inequality follows from the inequality $p_j \leq \ol p$ and the definition of $\ul p$.
Hence, by Brouwer's fixed point theorem, there exists $p \in [\underline p,\overline p]^{E(\mathbf M) }$ with $\Psi(p) = p$. 

Next, we show that any fixed point $p$ of $\Psi$ satisfies $p \geq 0$. Let $p$ be a fixed point of $\Psi$. Fix $i \in E(\mathbf M)$, and let 
$$E \in \arg \min_{E' \subset E(\mathbf M) \backslash \{i\}}  U( E(\mathbf{M}), \mathbf M) - U(E', \mathbf{M}) - \sum_{\substack{j \in E(\mathbf{M}) \backslash E'\\  j \neq i}} p_j.$$
We consider two cases: (i) $|E(\mathbf M) \backslash E| = 1$, so $E = E(\mathbf M)\backslash \{i\}$, and (ii) $|E(\mathbf M) \backslash E| \geq 2$. In case (i), 
$$p_i =  U( E(\mathbf{M}), \mathbf M) - U(E, \mathbf{M}) - \sum_{\substack{j \in E(\mathbf{M}) \backslash E \\  j \neq i}} p_j = U( E(\mathbf{M}), \mathbf M) - U(E(\mathbf M)\backslash \{i\}, \mathbf{M})\geq 0,$$
since adding a model can only weakly increase the consumer's payoff (because the consumer can always assign zero weight to the added model).
In case (ii), there exists $k \neq i$ with $k \notin E$, and so
\begin{align*}
p_k & =  \min_{E' \subset E(\mathbf M) \backslash \{k\}}  U( E(\mathbf{M}), \mathbf M) - U(E', \mathbf{M}) - \sum_{\substack{j \in E(\mathbf{M}) \backslash E'\\  j \neq k}} p_j \\
& \leq U( E(\mathbf{M}), \mathbf M) - U(E \cup \{i\}, \mathbf{M}) - \sum_{\substack{j \in E(\mathbf{M}) \backslash E \cup \{i\} \\  j \neq k}} p_j\\
\Longrightarrow \sum_{j \in E(\mathbf{M}) \backslash E \cup \{i\}} p_j &\leq U( E(\mathbf{M}), \mathbf M) - U(E \cup \{i\}, \mathbf{M}).
\end{align*}
Hence, 
\begin{align*}
p_i &=  U( E(\mathbf{M}), \mathbf M) - U(E, \mathbf{M}) - \sum_{\substack{j \in E(\mathbf{M}) \backslash E \\  j \neq i}} p_j\\
& \geq U( E(\mathbf{M}), \mathbf M) - U(E, \mathbf{M}) - \left( U( E(\mathbf{M}), \mathbf M) - U(E \cup \{i\}, \mathbf{M}) \right) \\
& = U(E \cup \{i\}, \mathbf{M}) - U(E, \mathbf{M}) \geq 0.
\end{align*}   
Hence, if $p = \Psi(p)$, $p \geq 0$.  

Next, we show that given any $p$ solving \eqref{e:prices}, the subgame that starts at the pricing stage has a SPNE in which firms set prices $p$, and in which the consumer buys all models in $E(\mathbf M)$.   

Consider first the case with $N_E= |E(\mathbf M)| = 1$, and note that the price $p_i$ of the only entrant in \eqref{e:prices} is
$p_i = U(E(\mathbf M),\mathbf M) - U(\emptyset,\mathbf M) = U(E(\mathbf M),\mathbf M) - \ul u$: i.e., the entrant charges a price that extracts all the surplus from the consumer. Clearly, this is the only equilibrium price of the pricing subgame when there is one entrant. 

Next, consider the case with $N_E\geq 2$. Note first that, under the prices in \eqref{e:prices}, the consumer finds it optimal to purchase all models. To see this, note that
\begin{align*}
\forall i, \quad p_i &= \min_{E' \subset E(\mathbf M) \backslash \{i\}}  U( E(\mathbf{M}), \mathbf M) - U(E', \mathbf{M}) - \sum_{\substack{j \in E(\mathbf{M}) \backslash E' \\  j \neq i}} p_j \\
\Longleftrightarrow \forall i, \quad  U( E(\mathbf{M}), \mathbf M) - \sum_{j \in E(\mathbf M)} p_j &= \max_{E' \subset E(\mathbf M)\backslash \{i\}}  U(E', \mathbf{M}) - \sum_{j \in E'} p_j.
\end{align*}
Hence, at these prices, the consumer (weakly) prefers to buy all models in $E(\mathbf M)$ than to buy any subset of models (including not buying any model). 

Next, note that the payoff that the consumer obtains from purchasing all models is 
$$U(E (\mathbf M), \mathbf M ) - \sum_{j \in E(\mathbf M)} p_j,$$
while the payoff that the consumer gets from not purchasing model $i \in E(\mathbf M)$ is
$$\max_{E' \subset E(\mathbf M)\backslash\{i\}} U(E', \mathbf M ) - \sum_{j \in E'} p_j.$$
Hence, in any equilibrium of the pricing subgame in which the consumer buys all the models, firm $i$ charges a price $p_i$ that leaves the consumer indifferent between buying all models or not buying model $i$: 
\begin{align*}
    0 = &U(E (\mathbf M), \mathbf M ) - \sum_{j \in E(\mathbf M)} p_j - \left(\max_{E' \subset E(\mathbf M)\backslash\{i\}} U(E', \mathbf M ) - \sum_{j \in E'} p_j.\right)\\
    \Longleftrightarrow p_i  =& \min_{E' \subset E(\mathbf M) \backslash \{i\}}  U( E(\mathbf{M}), \mathbf M) - U(E', \mathbf{M}) - \sum_{j \in E(\mathbf{M}) \backslash E',  j \neq i} p_j.
\end{align*}
\end{proof}

\subsection{Proof of \Cref{cor:prices}}
\begin{proof}
Fix models $\mathbf M$, and suppose firm prices $(p_i)$ are given by \eqref{e:prices2}. To establish the result, we show that, under these prices, for each $i \in E(\mathbf M)$, 
\begin{equation}\label{e:max} 
\max_{E' \subset E(\mathbf M) \backslash \{i\}} U(E',\mathbf M) + \sum_{\substack{j \in E(\mathbf M)\backslash E'\\ j \neq i}} p_j =U(E(\mathbf M)\backslash \{i\},\mathbf M).
\end{equation}
Note that this implies that prices $(p_i)$ given by \eqref{e:prices2} satisfy \eqref{e:prices}. Proposition \ref{p:prices1} then implies that these prices form a SPNE of the pricing subgame. 

Note first that, when $|E(\mathbf M)| = 1$, equation \eqref{e:max} automatically holds.  
Consider next the case with $|E(\mathbf M)| \geq 2$. Pick $i \in E(\mathbf M)$, and $E' \subset E(\mathbf M) \backslash \{i\}, E' \neq E(\mathbf M) \backslash \{i\}$, so that there exists $k \neq i$ with $k \in E(\mathbf M)$ but $k \notin E'$. Let $E'' =E' \cup \{k\}$, and note that
\begin{align*} 
&\left(U(E',\mathbf M) + \sum_{\substack{j \in E(\mathbf M)\backslash E'\\ j \neq i}} p_j\right) - \left(U(E'',\mathbf M) + \sum_{\substack{j \in E(\mathbf M)\backslash E'' \\ j \neq i}} p_j\right) \\
=& U(E'' \backslash \{k\},\mathbf M) - U(E'',\mathbf M) + p_k \\
=& U(E'' \backslash \{k\},\mathbf M) - U(E'',\mathbf M) + U(E(\mathbf M),\mathbf M) - U(E(\mathbf M)\backslash \{k\},\mathbf M)\leq 0,
\end{align*}
where the second equality follows since prices satisfy \eqref{e:prices2}, and the inequality follows from decreasing marginal returns. Hence, 
$$U(E'',\mathbf M) + \sum_{\substack{j \in E(\mathbf M)\backslash E''\\ j \neq i}} p_j = U(E'\cup \{k\},\mathbf M) + \sum_{\substack{j \in E(\mathbf M)\backslash E'\cup \{k\}\\ j \neq i}} p_j \geq U(E',\mathbf M) + \sum_{\substack{j \in E(\mathbf M)\backslash E'\\ j \neq i}} p_j.$$
Since this inequality holds for all $E' \subset E(\mathbf M) \backslash \{i\}, E' \neq E(\mathbf M) \backslash \{i\}$, it follows that
$$\forall E' \subset E(\mathbf M)\backslash\{i\}, \quad U(E(\mathbf M)\backslash\{i\},\mathbf M) \geq U(E',\mathbf M) + \sum_{\substack{j \in E(\mathbf M)\backslash E'\\ j \neq i}} p_j,$$
and so \eqref{e:max} holds. 
\end{proof}

\subsection{Proof of \Cref{p:efficiency}}

\begin{proof}
Fix models $\mathbf{M} = (M_i)$, and for each $i$ let $\hat f_{i,D^{(i)}}(x) =M_i(x,D^{(i)})$. The total surplus from models $\mathbf{M}$ is
$$TS(\mathbf{M}) = -\esp\left[\left( \sum_{i\in E(\mathbf{M})}w_i^{E(\mathbf{M})} \hat f_{i,D^{(i)}}(x) - y\right)^2 \right] - \sum_{i \in E(\mathbf{M})} c(M_i) - \ul u.$$

Suppose the action profile $\mathbf{M}^*$ maximizes total surplus. For each firm $i$, let $\hat f^*_{i,D^{(i)}}(\cdot) = M^*_i(\cdot, D^{(i)})$.
Consider firm $j$ and model $M_j' \neq M^*_j$. Let $\hat f'_{j,D^{(j)}}(\cdot) = 
M_j'(\cdot, D^{(j)})$. Given optimal weights and the prices in Corollary 
\ref{p:prices1}, the profit (net of entry costs) that firm $j$ gets under action profile $\mathbf{M}' = 
\left(\mathbf{M}^*_{-j},M_j'\right)$ is
\begin{align*}
\Pi_j(\mathbf{M}') =& -\esp\left[\left( \sum_{i\in E(\mathbf{M}')} w_i^{E(\mathbf{M}')}\hat f_{i,D^{i}}(x)  - y\right)^2 \right] \\
& \quad +\esp\left[\left( \sum_{i\in E(\mathbf{M}^*)\backslash \{j\}} w_i^{E(\mathbf{M}^*)\backslash\{j\}} \hat f^*_{i,D^{i}}(x) - y\right)^2 \right] - c(M_j') \\
=& TS(\mathbf{M}') - TS\left(\mathbf{M}^*_{-j},\emptyset\right). 
\end{align*}
Note then that, 
\begin{align*}
    \Pi_j(\mathbf{M}') - \Pi_j(\mathbf{M}^*) =& \quad TS(\mathbf{M}') - TS\left(\mathbf{M}^*_{-j},\emptyset\right) - TS(\mathbf{M}^*) + TS\left(\mathbf{M}^*_{-j},\emptyset\right) \\
    =& \quad TS(\mathbf{M}') - TS(\mathbf{M}^*) \\
    \leq & \quad  0.
\end{align*}
Hence, the game has a SPNE in which firms choose $\mathbf{M}^*$ and set the corresponding prices in Corollary \ref{p:prices1}, and the consumer purchases all models and chooses weights optimally. 
\end{proof}

\subsection{Proof of \Cref{p:symmetric_deterministic}}

We begin with a lemma characterizing when models satisfy decreasing marginal returns.

\begin{lemma}\label{lem:symmetric}
Models satisfy decreasing marginal returns when all firms have access to the same model and $\ul u< - B - \frac{3}{2}V - \sigma^2$. 
\end{lemma}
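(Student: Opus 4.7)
The plan is to compute $U(E,\mathbf{M})$ in closed form using symmetry and the bias-variance decomposition of \Cref{p:bias_variance}, and then verify the decreasing marginal returns inequality by a short case check. Since every firm has the same model $M$ trained on its own independent dataset, for any non-empty $E$ the estimators $\{\hat f_{i,D^{(i)}}(x)\}_{i \in E}$ are i.i.d.\ conditional on $(f,x)$, with common conditional bias $b(f,x) = \mathbb E_D[\hat f_D(x)\mid f,x] - f(x)$ and common conditional variance. For any weights $w$ on $E$ with $\sum_i w_i = 1$, the conditional bias of $\sum_i w_i \hat f_i(x)$ equals $b(f,x)$ (since the weights sum to one), and by independence the conditional variance equals $\sum_i w_i^2$ times the common conditional variance. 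Applying \Cref{p:bias_variance} and taking expectation over $(f,x)$, the expected squared loss is $B + V\sum_i w_i^2 + \sigma^2$.

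The next step is to pick the optimal weights. Since $\sum_i w_i^2$ is strictly convex and permutation-symmetric on the simplex, it is uniquely minimized at $w_i = 1/|E|$, so $U(E,\mathbf{M}) = -B - V/|E| - \sigma^2$ for every non-empty $E$, and $U(\emptyset,\mathbf{M}) = \ul u$ by definition. With these closed-form values in hand, decreasing marginal returns becomes a numerical comparison that I would verify by splitting on $|E'|$.

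For $|E'| = m \ge 2$ with $E' \subset E$ and $|E| = k \ge m$, both sides of the inequality are telescoping differences: the left-hand side equals $V/[m(m-1)]$ and the right-hand side equals $V/[k(k-1)]$, so the inequality follows immediately from $m \le k$. For $|E'| = 1$ the left-hand side is $-B - V - \sigma^2 - \ul u$, while the right-hand side is either $0$ (if $|E| = 1$) or at most $V/2$ (attained when $|E| = 2$). The binding case is therefore $|E| = 2$, and the condition reduces to $-B - V - \sigma^2 - \ul u \ge V/2$, which is exactly the hypothesis $\ul u \le -B - \tfrac{3}{2}V - \sigma^2$.

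I do not anticipate a substantive obstacle: the argument is essentially a direct computation once one notices that uniform weighting is optimal. The only step that requires a moment of care is justifying the uniform-weight minimizer cleanly, but this follows from strict convexity of $\sum_i w_i^2$ combined with the permutation symmetry of the objective.
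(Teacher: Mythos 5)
Your proposal is correct and follows essentially the same route as the paper: compute $U(E,\mathbf M) = -B - \tfrac{V}{|E|} - \sigma^2$ via the bias-variance decomposition with optimal (uniform) weights, then case-split on $|E'|$, with the binding case $|E'|=1$, $|E|=2$ yielding exactly the stated threshold on $\ul u$. One tiny slip that does not affect the conclusion: when $|E|=1$ the right-hand side is not $0$ but equals $-B-V-\sigma^2-\ul u$ (since $U(\emptyset,\mathbf M)=\ul u$), so the two sides coincide and the inequality holds with equality.
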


\begin{proof}
Note first that, when all firms choose the same model, it is optimal for the consumer to assign equal weights to each model. Hence, for any set $E \subset E(\mathbf M)$ 
\begin{align*}
U(E,\mathbf M) & = - \esp_{f,D,x}\left[\left( \sum_{i\in E}\frac{1}{|E|}\hat f_{i,D^i} - y \right)^2 \right]\\
& = - B - \frac{1}{|E|}V - \sigma^2,
\end{align*}
where the last equality uses Lemma \ref{p:bias_variance}, together with all models being identical with bias $B$ and variance $V$, and with datasets being iid across firms (so that the variance of the averaged prediction is $1/|E|$ the variance of each model). 

Fix $E \subset E(\mathbf M)$, a nonempty $E' \subset E$, and $j \in E'$. If $|E'| \geq 2$, then 
\begin{align*}
    U(E',\mathbf M) - U(E'\backslash\{j\},\mathbf M) &= \frac{V}{|E'|(|E'| - 1)}  \\
    &\geq \frac{V}{|E|(|E| - 1)} = U(E,\mathbf M) - U(E\backslash\{j\},\mathbf M).
\end{align*}
If $|E'| = 1$, then $E' = \{j\}$, and so
$$ U(E',\mathbf M) - U(E'\backslash\{j\},\mathbf M) = - B - V - \sigma^2- \ul u.$$
We consider two cases when $|E'|=1$. If $E' =E$, then $$U(E',\mathbf M) - U(E'\backslash\{j\},\mathbf M) = U(E,\mathbf M) - U(E\backslash\{j\},\mathbf M).$$
If $E' \subsetneq E$, then
\begin{align*}
    U(E',\mathbf M) - U(E'\backslash\{j\},\mathbf M) &= - B - V - \sigma^2- \ul u  \\
    &\geq \frac{1}{2} V\geq \frac{V}{|E|(|E| - 1)} = U(E,\mathbf M) - U(E\backslash\{j\},\mathbf M),
\end{align*}
where the first inequality follows by assumption.
\end{proof}

The proposition follows from a second lemma, which describes equilibrium outcomes and surpluses:

\begin{lemma}\label{l:symmetric_deterministic}
Suppose all firms are symmetric and have access to the same model $M$ with expected variance $V$ and expected squared bias $B$. 

In every pure-strategy SPNE, firms' entry decisions maximize total surplus. 
\begin{enumerate}
\item[(i)] If $V<2 c$, in every pure-strategy SPNE one firm enters, and sets price $p = -B-V-\sigma^2-\underline{u}$. The consumer surplus is 0.
\item[(ii)] If $V\geq 2 c$, in every pure-strategy SPNE the number of entrants is 
\begin{equation}\label{e:entry}
    N_E = \max\left\{j \in \NN : \frac{V}{j(j-1)}  \geq c \right\},\end{equation}
and all firms that enter charge price 
$p = \frac{V}{N_E(N_E-1)}.$ The consumer surplus is $ - B - \frac{2N_E - 1}{N_E(N_E - 1)}V - \sigma^2 - \ul u$. 
\end{enumerate}
\end{lemma}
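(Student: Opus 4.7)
The plan is to combine the price characterization in Corollary \ref{cor:prices} (available because Lemma \ref{lem:symmetric} establishes decreasing marginal returns under the maintained outside-option assumption) with elementary one-firm entry/exit deviation arguments. By symmetry of the common model and independence of the datasets, the consumer's optimal weights over any nonempty set $E$ of purchased models are uniform, so Lemma \ref{p:bias_variance} gives $U(E,\mathbf M) = -B - V/|E| - \sigma^2$. Plugging into \eqref{e:prices2}, whenever $n \geq 2$ firms enter each charges the common price $p = V/(n(n-1))$; when a single firm enters, Proposition \ref{p:prices1} specializes to the monopoly price $p = -B - V - \sigma^2 - \ul u$ that extracts all surplus.

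With these prices in hand, a configuration with $n$ entrants is a pure-strategy SPNE if and only if (a) no entrant wants to exit, i.e.\ $p \geq c$, and (b) no non-entrant wants to enter, i.e.\ the hypothetical price with one additional entrant, $V/(n(n+1))$, is at most $c$. In Case (i), $V < 2c$ rules out $n \geq 2$ because $V/(n(n-1)) \leq V/2 < c$ for every $n \geq 2$, while the outside-option assumption $\ul u \leq -B - V - \sigma^2 - c$ makes $n = 1$ profitable and deters $n = 0$. In Case (ii), $V/2 \geq c$ dislodges the monopolist, and the unique $n$ satisfying both (a) and (b) is exactly $N_E = \max\{j : V/(j(j-1)) \geq c\}$; the weak inequality in this definition places the boundary $V = 2c$ inside Case (ii), consistent with the equilibrium selection rule that indifferent firms enter.

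For the efficiency claim I write total surplus as a function of the number of entrants alone, $TS(n) = -B - V/n - \sigma^2 - nc - \ul u$ for $n \geq 1$ (with $TS(0) = 0$), and observe that $TS(n+1) - TS(n) = V/(n(n+1)) - c$ is nonnegative precisely when $n+1 \leq N_E$. Hence the surplus-maximizing entry count coincides with the equilibrium $N_E$ in both cases, and the outside-option assumption gives $TS(N_E) \geq TS(1) \geq 0$ so some entry is itself efficient. The consumer surplus formulas then follow by direct substitution: in Case (i), $U(\{i\},\mathbf M) - p - \ul u = 0$; in Case (ii), $U(E(\mathbf M),\mathbf M) - N_E p - \ul u$ simplifies to $-B - (2N_E - 1) V / (N_E(N_E - 1)) - \sigma^2 - \ul u$ after combining $V/N_E$ and $V/(N_E - 1)$ over the common denominator $N_E(N_E-1)$.

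Since Corollary \ref{cor:prices} delivers the prices outright, the proof is essentially mechanical; the only point requiring care is the knife-edge case $V = 2c$, where the weak inequality in the definition of $N_E$ must be read so that $N_E = 2$ rather than $N_E = 1$, matching the equilibrium selection used throughout \Cref{s:symmetric}.
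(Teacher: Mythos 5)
Your proposal is correct and follows essentially the same route as the paper's proof: apply Corollary \ref{cor:prices} with uniform weights to get $U(E,\mathbf M)=-B-V/|E|-\sigma^2$ and the prices $V/(n(n-1))$ (resp.\ the full-extraction monopoly price), then pin down $N_E$ via the entry/exit deviation inequalities and substitute to get the surpluses. The only difference is cosmetic --- you make explicit the telescoping computation $TS(n+1)-TS(n)=V/(n(n+1))-c$ where the paper merely asserts ``one can check that $N_E$ maximizes total surplus.''
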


\begin{proof}
Consider first the pricing subgame. Suppose that $N_E=|E(\mathbf{M})|$ firms entered the market, and each choose the model $M$. Since all firms have the same model, by Corollary \ref{cor:prices}, in an equilibrium in which all models are bought, all firms set a price equal to their marginal contribution. Further, since each firm sells the same model given their data, the consumer must weigh each model equally, that is, $w_i=\frac{1}{N_E}$ for all $i$. 



Using the bias-variance decomposition in Lemma \ref{p:bias_variance}, we can explicitly compute the equilibrium price. If there is only one entrant, 
\[
p= -\mathbb{E}_{f,x}\left[(\hat{f}_{1,D^{(1)}}(x)-y)^2\right] -\underline{u} = -B-V-\sigma^2-\underline{u}.
\]

If there are two or more entrants, $N_E\geq 2$, then
\begin{align*}
    p=&  -\mathbb{E}_{f,D,x}\left[ \left( \frac{1}{N_E} \sum_{i=1}^{N_E} \hat{f}_{i,D^i}(x)-y\right)^2\right]+\mathbb{E}_{f,D,x}\left[ \left( \frac{1}{N_E-1} \sum_{i=1}^{N_E-1} \hat{f}_{i,D^i}(x)-y\right)^2\right] \\
    =& \quad \mathbb{E}_{f,D,x} \left[ -\frac{1}{N_E^2} \sum_{i=1}^{N_E} \mathrm{Var}_D\left(\hat{f}_{i,D^i}(x)\mid x\right) + \frac{1}{(N_E-1)^2} \sum_{i=1}^{N_E-1} \mathrm{Var}_D\left(\hat{f}_{i,D^i}(x)\mid x\right)\right] \\
    =& \quad \frac{V}{N_E(N_E-1)}.
\end{align*}
Note that price is strictly decreasing in the number of firms in the market when $N_E\geq 2$, and that a firm's profit is independent of $B$. Intuitively, the marginal value for the consumer of purchasing one model is that it reduces the variance. At the same time, since all firms have access to the same model, buying one more model leaves the bias unchanged.  

Next, consider the stage where the firms are making entry decisions. The equilibrium price $p$ should be higher than the cost of entry $c$ so that firms are willing to enter. 
Hence, the number of entrants is given by $N_E$ in \eqref{e:entry}. One can check that $N_E$ maximizes total surplus $TS = - B - \frac{1}{N_E}V - \sigma^2- N_E c -\ul u$. 


Now we consider surpluses.
A firm's surplus in equilibrium is just the price minus the entry cost:
\[
\begin{cases}
\frac{V}{N_E(N_E-1)}-c & \text{ if } N_E>1 \\
-B-V-\sigma^2-\underline{u}-c & \text{ if } N_E=1
\end{cases}
\]

For consumer surplus, if $N_E=1$, the market is a monopoly and equilibrium price is such that the consumer is indifferent between purchasing the model and the outside option. The monopolist extracts all surplus and thus, consumer surplus is 0. If $N_E\geq 2$, consumer surplus is equal to
\begin{align*}
& -\mathbb{E}_{f,x}\left[ \left( \frac{1}{N_E} \sum_{i=1}^{N_E} \hat{f}_i(x)-y\right)^2\right]-pN_E -\ul u \\
=& -B-\frac{V}{N_E}-\sigma^2-\frac{V}{N_E-1} - \ul u\\
=&  - B - \frac{2N_E - 1}{N_E(N_E - 1)}V - \sigma^2 - \ul u.
\end{align*}
This completes the proof.
\end{proof}
\subsection{Proof of \Cref{prop:diff}}
\begin{proof}

Suppose firm $1$ chooses $M(t)$ and firm $2$ chooses $M(t_0).$ Let $w(t)$ be the weight the consumer places on firm $1$'s prediction. Firm $1$'s payoff is
\begin{align*}
\Pi(t) & = V(t_0)+B(t_0)- (1-w(t))^2(V(t_0)+B(t_0))- w(t)^2 (V(t) +B(t)) \\ & \quad- 2w(t)(1-w(t))\int b_0(f,x)b_t(f,x) dFdG- c
\\ & =V(t_0)+B(t_0)- (1-w(t))^2(V(t_0)+B(t_0))- w(t)^2 (V(t) +B(t)) \\ & \quad- 2w(t)(1-w(t))\sqrt{B(t_0)\,B(t)}\,\cos(\theta(t))- c.
\end{align*}
The result will follow from computing the first-order condition and second-order condition for optimality of the best response $M(t_0)$.

The optimal weights for the consumer also maximize firm $1$'s payoff. So by the envelope formula, we can calculate $\frac{d\Pi}{dt}$ taking $w(t)$ to be constant. We then compute

\begin{align*}
\frac{d\Pi}{dt}
  &=  -w(t)^2\,V'(t) -w(t)^2 \,B'(t)
     - 2w(t)(1-w(t))\sqrt{B(t_0)}\,\frac{d}{dt}\!\left[\sqrt{B(t)}\cos(\theta(t))\right] \\[4pt]
  &= -w(t)^2\,V'(t) -w(t)^2 \,B'(t)
     - 2w(t)(1-w(t))\sqrt{B(t_0)}\left(
        \frac{B'(t)}{2\sqrt{B(t)}}\cos(\theta(t))-
        \sqrt{B(t)}\sin(\theta(t))\,\theta'(t)
     \right).
\end{align*}

We now evaluate at $t=t_0$. By symmetry, we have $w(t_0) = \frac12$. Since $\theta(t_0) = 0$, we obtain the first-order condition $$-\frac14 V'(t_0) - \frac12 B'(t_0) = 0.$$

We next consider the second-order condition. To calculate the second derivative, it is helpful to write
$$ Z(t) = \sqrt{B(t_0)}\left(
        \frac{B'(t)}{2\sqrt{B(t)}}\cos(\theta(t))-
        \sqrt{B(t)}\sin(\theta(t))\,\theta'(t)
     \right) $$
and calculate 
\begin{align*}Z'(t) & = \sqrt{B(t_0)} \biggl(
        \frac{B''(t)}{2\sqrt{B(t)}}\cos(\theta(t))- 
        \frac{B'(t)^2}{4 B(t)^{3/2}}\cos(\theta(t))-
        \frac{B'(t)}{\sqrt{B(t)}}\sin(\theta(t)) \theta'(t)  \\ & \quad  - \sqrt{B(t)}\cos(\theta(t))\,\theta'(t)^2 - \sqrt{B(t)}\sin(\theta(t))\,\theta''(t)
     \biggr) .\end{align*}
We then have 
\begin{align*}
\frac{d^{2}\Pi}{dt^{2}}
  =&-w(t)^2\,V''(t) - w(t)^2   B''(t) - 2w(t)(1-w(t)) Z'(t)  \\ & \quad - 2w(t)w'(t) (V'(t) + B'(t)) -2 (1-2w(t))w'(t) Z(t).
\end{align*}

We want to evaluate this expression at $t=t_0$. Recall that $w(t_0)=\frac12$ and $\theta(t_0)=0$. We can compute $Z(t_0) = \frac{B'(t_0)}{2}$ and $Z'(t_0) = \frac{ B''(t_0)}{2} - \frac{B'(t_0)^2}{4 B(t_0)} -B(t_0) \theta'(t_0)^2$. So 
\[
{
\left.\frac{d^{2}\Pi}{dt^{2}}\right|_{t=t_0}
  = -\frac14\,V''(t_0)
    -\frac12\,B''(t_0)
    - w'(t_0) (V'(t_0) + B'(t_0)) 
    +\frac{B'(t_0)^{2}}{8\,B(t_0)}
    +\frac{B(t_0)}{2}\theta'(t_0)^{2}
}.
\]
The second-order condition requires that this quantity is non-positive. To complete the proof, we will show that $$-w'(t_0)(V'(t_0) + B'(t_0)) = \frac{B'(t_0)^2}{4V(t_0)}.$$

We want to find $w'(t_0)$. The weight $w(t)$ is chosen to minimize
$$w(t)^2 (V(t) + B(t)) + (1-w(t))^2(V(t_0) + B(t_0))  + 2w(t)(1-w(t)) \sqrt{B(t)B(t_0)}\cos(\theta(t)).$$
The first-order condition for $w(t)$ gives
$$w(t)(V(t) +B(t)) + (w(t)-1)(V(t_0)+B(t_0)) + (1-2w(t)) \sqrt{B(t)B(t_0)}\cos(\theta(t))=0 .$$
Implicitly differentating in $t$, we obtain

\begin{align*}w'(t)(V(t) + B(t) + V(t_0)+B(t_0) - 2\sqrt{B(t)B(t_0)}\cos(\theta(t)))  \\ + w(t)(V'(t) + B'(t)) + (1- 2w(t)) \sqrt{B(t_0)}\,\frac{d}{dt}\!\left[\sqrt{B(t)}\cos(\theta(t))\right] & = 0. \end{align*}

Substituting $t=t_0$,  we have 
$$w'(t_0) = -\frac{V'(t_0) + B'(t_0)}{4V(t_0)}.$$
Finally, the first-order condition $V'(t_0) + 2B'(t_0)=0 $
gives
$$-(V'(t_0) + B'(t_0))w'(t_0) = \frac{B'(t_0)^2}{4V(t_0)}.$$
Substituting into the expression for $\left.\frac{d^{2}\Pi}{dt^{2}}\right|_{t=t_0}$ above gives the result.
\end{proof}

\subsection{Differentiation in Linear Regression Models}\label{a:diff_lr}

Let $\ol \beta^2 = \esp_{\beta}[\beta_l^2]$. The following Lemma gives the bias-variance decomposition of the expected mean squared
error for the linear regression setting. 

\begin{lemma}\label{l:ols} Suppose firm $i$ chooses model $M_i$ that includes $|M_i|\leq k$ covariates. Then, firm $i$'s prediction error is 
\begin{align}
\esp[(y - x^T \hat \beta^{(i)})^2] &= \sigma^2 + \underbrace{\ol \beta^2(k - |M_i|) }_{B_i = \esp [(x^T \beta - x^T \esp[\hat \beta^{(i)}] )^2] } +  \underbrace{(\ol \beta^2(k - |M_i|) + \sigma^2)\left(\frac{|M_i|}{n-|M_i|-1}\right)}_{V_i = \esp [(x^T \esp[\hat \beta^{(i)}] - x^T \hat \beta^{(i)})^2]}. \label{eq:error_lr}
\end{align}
\end{lemma}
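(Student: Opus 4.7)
The plan is to apply the bias-variance decomposition (Lemma \ref{p:bias_variance}) and compute its three pieces under the stated structure. Write $M_i^c = \{1,\dots,k\} \setminus M_i$ for the omitted covariates. Since $y_j = x_{j,M_i}^T \beta_{M_i} + x_{j,M_i^c}^T \beta_{M_i^c} + \epsilon_j$, running OLS on $X_{i,M_i}$ gives
$$\hat\beta^{(i)}_{M_i} = \beta_{M_i} + (X_{i,M_i}^T X_{i,M_i})^{-1} X_{i,M_i}^T\!\left( X_{i,M_i^c} \beta_{M_i^c} + \epsilon \right),$$
with $\hat\beta^{(i)}_{M_i^c} = 0$ by construction. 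The key structural fact I would exploit is that columns of $X_i$ are independent (because $x \sim N(0, I_k)$), so $X_{i,M_i^c}$ is independent of $X_{i,M_i}$ and has mean zero.

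For the bias term, I would take expectations over the data $D$ in the identity above. Independence of $X_{i,M_i^c}$ from $X_{i,M_i}$ (together with $\esp[X_{i,M_i^c}] = 0$ and $\esp[\epsilon] = 0$) gives $\esp_D[\hat\beta^{(i)}_{M_i} \mid \beta, x] = \beta_{M_i}$. Thus the conditional prediction bias equals $-x_{M_i^c}^T \beta_{M_i^c}$, and since the coordinates of both $x$ and $\beta$ are mean-zero and independent across indices, $B_i = \esp_{\beta,x}\!\left[(x_{M_i^c}^T \beta_{M_i^c})^2\right] = \sum_{l \in M_i^c} \esp[x_l^2]\,\esp[\beta_l^2] = (k-|M_i|)\,\ol\beta^2$.

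For the variance term, the centered prediction is $x_{M_i}^T (X_{i,M_i}^T X_{i,M_i})^{-1} X_{i,M_i}^T (X_{i,M_i^c}\beta_{M_i^c} + \epsilon)$. Conditional on $(X_{i,M_i}, x, \beta)$, the vector $X_{i,M_i^c}\beta_{M_i^c} + \epsilon$ has mean zero and covariance $(\|\beta_{M_i^c}\|^2 + \sigma^2)\,I_n$, because rows of $X_{i,M_i^c}\beta_{M_i^c}$ are i.i.d.\ $N(0,\|\beta_{M_i^c}\|^2)$ and independent of $\epsilon$. A quadratic-form calculation then yields
$$V_i = \esp\!\left[ x_{M_i}^T (X_{i,M_i}^T X_{i,M_i})^{-1} x_{M_i} \right] \cdot \esp\!\left[\|\beta_{M_i^c}\|^2 + \sigma^2\right],$$
using independence of $x$, $X_{i,M_i}$, and $\beta$. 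The second factor is $(k-|M_i|)\ol\beta^2 + \sigma^2$. For the first, $x_{M_i}\sim N(0, I_{|M_i|})$ gives $\esp_x[x_{M_i}^T A\, x_{M_i}] = \mathrm{tr}(A)$, and since $X_{i,M_i}^T X_{i,M_i}\sim W_{|M_i|}(I, n)$, the standard inverse-Wishart mean identity $\esp[(X_{i,M_i}^T X_{i,M_i})^{-1}] = \frac{1}{n-|M_i|-1}\,I$ (valid because $n > k \geq |M_i|$) delivers $\frac{|M_i|}{n-|M_i|-1}$. Summing $\sigma^2 + B_i + V_i$ produces the claimed identity \eqref{eq:error_lr}.

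The main obstacle is keeping the layered expectations straight: the decomposition must be arranged so that the cross terms among $x_{M_i}$, $X_{i,M_i}$, $X_{i,M_i^c}$, $\beta$, and $\epsilon$ drop out through independence, and the rotational invariance of the Gaussian design is what allows replacing $x_{M_i}^T (X_{i,M_i}^T X_{i,M_i})^{-1} x_{M_i}$ by $\mathrm{tr}((X_{i,M_i}^T X_{i,M_i})^{-1})$ in expectation. Once the decomposition is in place, each piece reduces either to a Gaussian quadratic form or to the inverse-Wishart mean identity; no further ingredients are required.
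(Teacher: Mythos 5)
Your proposal is correct and follows essentially the same route as the paper's proof: apply Lemma \ref{p:bias_variance}, write $\hat\beta^{(i)}_{M_i}=\beta_{M_i}+(X_{i,M_i}^TX_{i,M_i})^{-1}X_{i,M_i}^T\eta$ with $\eta=X_{i,M_i^c}\beta_{M_i^c}+\epsilon$ of conditional covariance $(\|\beta_{M_i^c}\|^2+\sigma^2)I_n$, get unbiasedness of the included coefficients from independence and mean-zero of the omitted columns, and evaluate the variance via the trace/quadratic-form reduction and the inverse-Wishart mean $\frac{1}{n-|M_i|-1}I$. The only difference is presentational (you compute the bias directly as $\esp[(x_{M_i^c}^T\beta_{M_i^c})^2]$ rather than first reducing prediction-space errors to coefficient-space norms), which is the same calculation.
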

\begin{proof}
Fix a model $M_i$ with $|M_i| = d\leq k$, and assume wlog that model $M_i$ includes covariates $l = 1,...,d$.
By Lemma \ref{p:bias_variance}, we have the following bias-variance decomposition:
\begin{equation}\label{eq:decomposition}
\esp[(y - x^T \hat \beta^{(i)})^2] = \sigma^2 + \esp_{\beta,x} [(x^T \beta - x^T \esp_{D^{(i)} | \beta}[\hat \beta^{(i)}] )^2] +  \esp_{\beta,D^{(i)},x} [(x^T \esp_{D^{(i)}| \beta}[ \hat \beta^{(i)}] - x^T \hat \beta^{(i)})^2].
\end{equation}
Consider the second term on the RHS of \eqref{eq:decomposition}, and note that 
\begin{align*}
\esp_{\beta,x} [(x^T \beta - x^T \esp_{D^{(i)}| \beta}[\hat \beta^{(i)}])^2] & = \esp_{\beta,x} \left[\left(\sum_{l=1}^k x_l(\beta_l - \esp_{D^{(i)}| \beta}[\hat \beta^{(i)}_{l}])\right)^2\right]\\
&=   \esp_{\beta,x}\left[\sum_{l=1}^k x^2_l(\beta_l - \esp_{D^{(i)}| \beta}[\hat \beta^{(i)}_{l}])^2\right]\\
& =  \left[\sum_{l=1}^k \esp_x[x^2_l]\esp_{\beta}(\beta_l - \esp_{D^{(i)}| \beta}[\hat \beta^{(i)}_{l}])^2\right]  = \esp_{\beta}[ \|\beta - \esp_{D^{(i)}| \beta}[\hat \beta^{(i)}]\|^2],
\end{align*}
where the second and third equalities use $x \sim N(0,I_k)$ and $x$ independent of $D^{(i)} = (X_i,Y_i)$ and $\beta$, and the last equality again uses $x \sim N(0,I_k)$, and so $\esp[x^2_l] = 1$ for all $l$.

Consider next the last term on the RHS of \eqref{eq:decomposition}: 
\begin{align*}
\esp_{\beta,D^{(i)},x} [(x^T \esp_{D^{(i)}| \beta}[ \hat \beta^{(i)}] - x^T \hat \beta^{(i)})^2] &= \esp_{\beta,D^{(i)},x}\left[\left(\sum_{l=1}^k x_l\left(\esp_{D^{(i)}| \beta}[\hat \beta^{(i)}_{l}]- \hat \beta^{(i)}_{l}\right) \right)^2\right]\\
&=   \esp_{\beta,D^{(i)},x}\left[\sum_{l=1}^k x^2_l\left(\esp_{D^{(i)}| \beta}[\hat \beta^{(i)}_{l}] - \hat \beta^{(i)}_{l}\right)^2\right]\\
& =  \left[\sum_{l=1}^k \esp_x[x^2_l]\esp_{\beta,D^{(i)}}\left(\esp_{D^{(i)}| \beta}[\hat \beta^{(i)}_{l}] - \hat \beta^{(i)}_{l} \right)^2\right] \\
&= \esp_{\beta,D^{(i)}}[ \|\esp_{D^{(i)}| \beta}[\hat \beta^{(i)}]- \hat \beta^{(i)} \|^2].
\end{align*}
Hence, 
$$\esp[(y - x^T \hat \beta^{(i)})^2] = \sigma^2 + \esp_\beta[ \|\beta - \esp_{D^{(i)}| \beta}[\hat \beta^{(i)}]\|^2] + \esp_{\beta,D^{(i)}}[ \|\esp_{D^{(i)}| \beta}[\hat \beta^{(i)}]- \hat \beta^{(i)} \|^2]$$

Let $\beta_{M_i} = (\beta_l)_{l \in M_i}$ denote the coefficients of the covariates included in $M_i$. Note that 
$$\hat \beta_{M_i}^{(i)} = (X^T_{i,M_i} X_{i,M_i})^{-1}X^T_{i,M_i} Y_i = (X^T_{i,M_i} X_{i,M_i})^{-1}X^T_{i,M_i}(X_{i,M_i} \beta_{M_i} + \eta ) =\beta_{M_i} +(X^T_{i,M_i} X_{i,M_i})^{-1}X^T_{i,M_i} \eta$$
where $\eta = Y_i - X_{i,M_i} \beta_{M_i} = \sum_{l=d+1}^k x_{i,l} \beta_l + \epsilon \sim N(0,\sigma_d^2 \times I)$ with $\sigma^2_d = \sum_{l=d+1}^k \beta^2_l + \sigma^2$.  Then, 
\begin{align*}
\esp_{D^{(i)}|\beta }[\hat \beta_{M_i}^{(i)}] = \beta_{M_i} + \esp_{D^{(i)}| \beta}[(X^T_{i,M_i} X_{i,M_i})^{-1}X^T_{i,M_i} \eta] =\beta_{M_i},  
\end{align*}
where we used  $\esp_{D^{(i)}| \beta}[(X^T_{i,M_i} X_{i,,M_i})^{-1}X_{i,M_i}^T\eta] = 0$ (since $\eta$ is independent of $X_{i,M_i}$, and $\esp_{D^{(i)}| \beta}[\eta]=0$). Since $\hat \beta^{(i)}_{l} = 0$ for all $l \notin M_i$, and using $\beta_{M_i^c} = (\beta_l)_{l \notin M_i}$ and $\ol \beta^2 = \esp_\beta[\beta_l^2]$, we get
$$\esp_\beta[ \|\beta - \esp_{D^{(i)}| \beta}[\hat \beta^{(i)}]\|^2] =  \esp_\beta [\|\beta_{M_i^c}\|^2] = \overline \beta^2(k -d) = \overline \beta^2(k -|M_i|).$$

Consider next the term $ \esp_{\beta,D^{(i)}}[ \|\esp_{D^{(i)}| \beta}[\hat \beta^{(i)}]- \hat \beta^{(i)} \|^2]$. Note that, since $\hat \beta^{(i)}_{l} = 0$ for all $l \notin M_i$, $ \esp_{\beta,D^{(i)}}[ \|\esp_{D^{(i)}| \beta}[\hat \beta^{(i)}]- \hat \beta^{(i)} \|^2] =  \esp_{\beta,D^{(i)}}[ \|\esp_{D^{(i)}| \beta}[\hat \beta_{M_i}^{(i)}]- \hat \beta_{M_i}^{(i)} \|^2]$. Hence, using $\hat \beta_{M_i}^{(i)} = \beta_{M_i} +(X^T_{i,M_i} X_{i,M_i})^{-1}X^T_{i,M_i} \eta $ and $\esp_{D^{(i)}| \beta}[\hat \beta^{(i)}_{M_i}] = \beta_{M_i}$, we get
\begin{align*}
\esp_{\beta,D^{(i)}}[ \|\esp_{D^{(i)}| \beta}[\hat \beta^{(i)}]- \hat \beta^{(i)} \|^2] & = \esp_{\beta} [\esp_{D^{(i)}|\beta} [ \|(X^T_{i,M_i} X_{i,M_i})^{-1}X^T_{i,M_i} \eta  \|^2]]\\
& =\esp_\beta[ \esp_{D^{(i)}| \beta}[\eta^T X_{i,M_i}(X^T_{i,M_i} X_{i,M_i})^{-1}(X^T_{i,M_i} X_{i,M_i})^{-1}X^T_{i,M_i} \eta ]]\\
& =\esp_\beta[ \esp_{D^{(i)}| \beta}[tr((X^T_{i,M_i} X_{i,M_i})^{-1}(X^T_{i,M_i} X_{i,M_i})^{-1}X^T_{i,M_i} \eta\eta^T X_{i,M_i}) ]]\\
& =tr(\esp_\beta[ \esp_{D^{(i)}| \beta}[(X^T_{i,M_i} X_{i,M_i})^{-1}(X^T_{i,M_i} X_{i,M_i})^{-1}X^T_{i,M_i} \esp_\eta[\eta\eta^T] X_{i,M_i} ]])\\
& =  tr\left(\esp_\beta\left[\esp_{D^{(i)}| \beta}[ (X^T_{i,M_i} X_{i,M_i})^{-1} ] \sigma_d^2 \right] \right)\\
& = \frac{d}{n-1-d}(\overline \beta^2(k - d) + \sigma^2) 
\end{align*}
where the fourth equality uses the independence of $\eta$ and  $X_{i,M_i}$, the fifth equality uses $\esp_{\eta}[\eta\eta^T] = \sigma^2_d I$ (since $\eta \sim N(0,\sigma_d^2 I)$, with $\sigma_d^2 = (\sum_{l=d+1}^k \beta^2_l + \sigma^2 )$), and the last equality follows since $\esp_\beta[\beta_l^2] = \ol \beta^2$ and since $(X^T_{i,M_i} X_{i,M_i})^{-1}$ has an inverse-Wishart distribution with $n$ degrees of freedom and scale matrix $I_d$, and so  $\esp_{D^{(i)}|\beta }[ (X^T_{i,,M_i} X_{i,M_i})^{-1} ] =  \frac{1}{n-1-d}I_d$.
\end{proof}

Suppose firms 1 and 2 enter the market with models $M_1$ and $M_2$. Let $w \in[0,1]$ denote the weight that the consumer puts on model 1. The price the consumer pays for model 1 is
\begin{align}
p_1 & =  \esp[(y - x^T \hat \beta^{(2)})^2] - \esp[(y - w x^T \hat \beta^{(1)} - (1-w) x^T \hat \beta^{(2)})^2] \notag \\
& = \esp[(y - x^T \hat \beta^{(2)})^2] - \sigma^2 - w^2 B_1 - w^2 V_1 - (1 - w)^2 B_2 - (1 - w)^2 V_2 \notag \\
&\quad -2 w (1-w) \esp_{f,x}[(f(x) - \esp_{D^{(1)}}[\hat f_{1,D^{(1)}}(x)  | f,x])(f(x) - \esp_{D^{(2)}}[\hat f_{2,D^{(2)}}(x)  | f,x])]\notag \\
& = \esp[(y - x^T \hat \beta^{(2)})^2]- \sigma^2 - w^2 B_1 - w^2 V_1 - (1 - w)^2 B_2 - (1 - w)^2 V_2 \notag \\
& \quad- 2 w(1-w) \ol \beta^2 |\{ l \in \{1,..,k\} : l \notin M_1 \cup M_2 \}|, \label{eq:price_lr}
\end{align}
where $\hat f_{i,D^{(i)}}(x)  =  x^T \hat \beta^{(i)}$ is $i$'s prediction. The last equality follows since, for $i=1,2$, $$f(x) - \esp_{D^{(i)}}[\hat f_{i,D^{(i)}}(x)  | f,x] = x^T \beta - x^T \esp_{D^{(i)}|\beta }[ \hat \beta^{(i)}] = \sum_{l \notin M_i}\beta_l x_l,$$ and so
\begin{align*}
 \quad \quad & \; \esp_{f,x}[(f(x) - \esp_{D^{(1)}}[\hat f_{1,D^{(1)}}(x)  | f,x])(f(x) - \esp_{D^{(2)}}[\hat f_{2,D^{(2)}} (x) | f,x])] \\
=& \; \esp_{f,x}\left[\sum_{l\notin M_1}(\beta_l x_l) \sum_{l' \notin M_2}(\beta_{l'} x_{l'}) \right]\\
=& \; \esp_{f,x}\left[\sum_{l \notin M_1 \cup M_2} x_l^2 \beta_l^2\right] \\
=& \; \ol \beta^2 |\{ l \in \{1,..,k\} : l \notin M_1 \cup M_2 \}|,
\end{align*}
where the second and third equalities use $x_l \sim N(0,I_k)$ and $\beta$ independent of $x$. 

\medskip

Assume that the entry cost $c(|M|)$ is strictly increasing and convex, with $c(1)$ small. We claim that if $k$ is large enough, there is no equilibrium in which both firms enter with the same model. To see why, note first that since firms' payoff from entering is bounded above by the consumer's outside option, for $k$ large enough a firm that enters will choose a model that excludes some covariates. 

Suppose by contradiction that there exists an equilibrium in which both firms enter with the same model $M$. By our arguments above $M$ must exclude at least one covariate.  
Pick $l \in M$ and $l' \notin M$, and let $M' = M \cup \{l'\} \backslash \{l\}$. That is, model $M'$ is equal to $M$, except that it includes covariate $l'$ and does not include covariate $l$.  Let $B$ and $V$ denote respectively, expected squared bias and expected variance of model $M$. Note that since covariates are exchangeable, model $M'$ also has squared bias $B$ and variance $V$.

Firm 1's payoff from choosing model $M$ when firm 2 chooses $M$ is 
$$\esp[(y - x^T \hat \beta^{(2)})^2] - \frac{1}{2} (B +  V) - \frac{1}{2} \ol \beta^2 (k - |M|)-\sigma^2-c(|M|),$$
where we used  \eqref{eq:price_lr} together with the fact that the consumer optimally chooses a weight of 1/2 when both firms sell the same model. 

Instead, firm 1's payoff from choosing model $M'$ when firm 2 chooses $M$ is 
$$\esp[(y - x^T \hat \beta^{(2)})^2] - \frac{1}{2} (B +  V) - \frac{1}{2} \ol \beta^2 (k - |M| - 1)-\sigma^2-c(|M'|),$$
where we again used  \eqref{eq:price_lr} and the fact that the consumer optimally chooses a weight of 1/2 when firms choose models $M$ and $M'$. Since models $M$ and $M'$ are equally costly, choosing model $M'$ is a strictly profitable deviation. 

\medskip

Next, we consider entry costs that do not depend on model size, i.e., $c(|M|) = c$ for all $M$. By \Cref{l:ols}, the model with covariates $\{1,\hdots,d\}$ for $0 \leq d \leq k$ has expected squared bias and expected variance
$$B_d = 
\ol \beta^2(k - d) \text{ and } V_d=(\ol \beta^2(k - d) + \sigma^2)\left(\frac{d}{n-d-1}\right). 
$$
The consumer's utility from prediction if both firms use this model is 
$$U(d) =- \sigma^2 - B_d - \frac12 V_d.$$
We compute that
$$U'(0) =- \frac{\ol \beta^2 (k-2n+2) + \sigma^2}{2n-2} \text{ and }U'(k) = \frac{(\ol \beta^2 (k^2 -3 k (n-1) + 2 (n-1)^2)) - ( n-1) \sigma^2}{2 (n-k-1)^2}.$$
We have $U'(0)>0$ and $U'(k)<0$ if
$$\frac{k^2}{n-1} - 3k + 2n-2 < \frac{\sigma^2}{\ol \beta^2} <-k+ 2n  - 2.$$
We can always choose $\sigma^2$ so these inequalities are both satisfied. This implies $U(d)$ has an interior maximum on the interval $[0,k] \subset \mathbb{R}.$ We want to find parameters such that $U(d)$ has an interior maximum on $\{0,1,\hdots,k\} \subset \mathbb{Z}$.

Note the signs of $U'(0)$ and $U'(d)$ are unchanged if we rescale $k$, $n$, and $\sigma^2$ proportionally, and their values converge. So we can choose these variables sufficiently large so that $U(d)$ has an interior maximum on $\{1,\hdots,k\} \subset \mathbb{Z}$. For these parameters, the efficient set of covariates when both firms are constrained to use the same covariates is interior. Then the same argument as in the costly covariates case above shows that this cannot be an equilibrium, so there must be an equilibrium in which the firms use different covariates.

\subsection{Proof of Proposition \ref{p:differentiation_surplus}}

\begin{proof}
    The consumer's expected payoff at the equilibrium with models $(M_1,M_2)$ is
    \begin{align*}U(\{1,2\}, (M_1,M_2)) - p_1 - p_2 &= U(\{1,2\}, (M_1,M_2))- (U(\{1,2\}, (M_1,M_2)) - U(\{1\},(M_1,M_2))) \\ & \quad  -(U(\{1,2\}, (M_1,M_2)) - U(\{2\},(M_1,M_2))) \\ & = U(\{1\},(M_1,M_2))+U(\{2\},(M_1,M_2))-U(\{1,2\}, (M_1,M_2)) .\end{align*}

    We can assume without loss of generality that \begin{equation}\label{eq:model_ranking} U(\{1\},(M_1,M_2)) -c(M_1) \geq U(\{2\},(M_1,M_2))-c(M_2).\end{equation} The consumer's expected payoff in the pricing subgame after firms choose models $(M_1,M_1)$ is
\begin{align*}U(\{1,2\}, (M_1,M_1)) - p_1 - p_2 &= U(\{1,2\}, (M_1,M_1))- 2(U(\{1,2\}, (M_1,M_1)) - U(\{1\},(M_1,M_1))) \\ &  = 2U(\{1\},(M_1,M_1))-U(\{1,2\}, (M_1,M_1)) .\end{align*}

We must have \begin{equation}\label{eq:deviation}U(\{1,2\}, (M_1,M_1)) - c(M_1) \leq U(\{1,2\}, (M_1,M_2)) -c(M_2)\end{equation} 
because otherwise firm $2$ would deviate to choose $M_1$ in the equilibrium with models $(M_1,M_2)$. Further note that $U(\{1\},(M_1,M_2))=U(\{1\},(M_1,M_1))$. Combining inequalities  \eqref{eq:model_ranking} and \eqref{eq:deviation} and adding $U(\{1\},(M_1,M_1))$ to both sides gives $$  U(\{1\},(M_1,M_2))+U(\{2\},(M_1,M_2))-U(\{1,2\}, (M_1,M_2)) \leq 2U(\{1\},(M_1,M_1))-U(\{1,2\}, (M_1,M_1)).$$
This inequality shows that the consumer surplus at the equilibrium with differentiation is weakly less than consumer surplus in the equilibirum of the pricing subgame after both firms choose model $M_1$. We further note the inequality is strict if either of \eqref{eq:model_ranking} and \eqref{eq:deviation} are.
\end{proof}

\subsection{Proof of \Cref{p:sequential_bias}}
\begin{proof}
Let $V_1 = \mathbb{E}_{f,x,D}[(\mathbb{E}_{D|f}[\hat{f}_1(x)]-\hat{f}_1(x))^2]$ be the expected variance of $M_1$. The payoff to firm 2 from entering and choosing $M_2$ with bias constant $\alpha_2$ and expected variance $V_2 =\mathbb{E}_{f,D,x}[(\mathbb{E}_{D|f}[\hat{f}_2(x)]-\hat{f}_2(x))^2] $ is  $p_2-c(M_2)$, where (omitting superscripts from weights)
\begin{align*}
    p_2 &= -\mathbb{E}\left[\left(w_1\hat{f}_1(x)+w_2\hat{f}_2(x)-y\right)^2\right]+\mathbb{E}\left[\left(\hat{f}_1(x)-y\right)^2\right]
    \\ & = (1-w_1^2)( \alpha_1^2 B_0 +  V_1) - w_2^2(\alpha_2^2 B_0 + V_2) - 2w_1 w_2\alpha_1\alpha_2 B_0 \text{ by \Cref{p:bias_variance} and \cref{eq:common_bias}}.
\end{align*}
So the payoff under firm $2$'s optimal model choice is $$ \max_{w \in [0,1], M_2 \in \mathcal{M}_2} (1-w^2) ( \alpha_1^2 B_0 +  V_1) - (1-w)^2(\alpha_2^2 B_0 + V_2) - 2w(1-w)\alpha_1\alpha_2 B_0 -c(M_2)$$
where $w$ is the weight the consumer places on model $M_1$. Firm 2 will enter if this maximum is positive and will not enter if it is negative.

At $M_1 = M^*$, we can perturb $M_1$ such that the derivative of $\alpha_1^2B_0+V_1$ is zero but the derivative of $\alpha_1$ is positive. This decreases $$(1-w^2) ( \alpha_1^2 B_0 +  V_1) - (1-w)^2(\alpha_2^2 B_0 + V_2) - 2w(1-w)\alpha_1\alpha_2 B_0 - c(M_2)$$
for every feasible $w$ and $M_2$. Therefore this perturbation decreases the equilibrium payoff for firm $2$, which is the maximum over feasible $w$ and $M_2$ of this expression,  and so decreases the value of entry. So there exists an interval of $c_f$ such that firm 2 would enter if firm 1 chooses any $M_1$ with $\alpha_1 \leq \alpha_1^*$ (where $\alpha_1^*$ is the bias constant of $M^*$) but not if firm 1 chooses some $M_1$ with $\alpha_1>\alpha^*$. If the outside option is sufficiently low, firm 1 will choose such a model $M_1$ at equilibrium.
\end{proof}

\subsection{Proof of \Cref{p:sequential_cost}}

\begin{proof}
Suppose firm $1$ chooses model $M_1$ with expected variance $V_1$. Firm $2$'s payoff from entering is
\begin{align*}
    p_2 &  = \max_{w\in [0,1], M_2 \in \mathcal{M}_2} \left(\varphi(w,M_2) + (1-w^2)V_1 - c_m(M_2) - c_f\right)
    \\ & = (1-w^2)V_1-c_f + \max_{w\in [0,1], M_2 \in \mathcal{M}_2} \left(\varphi(w,M_2)  - c_m(M_2) \right)\end{align*}
for some function $\varphi(w,M_2)$ that is independent of firm $1$'s choice $M_1$. So fixing $c_f$, firm $2$ chooses to enter if $V_1$ is above a threshold level and not to enter if $V_1$ is below this threshold level.

Given $\epsilon>0$, we can choose $c_f$ such that this threshold level is $V_1^* +\epsilon$, where $V_1^*$ is the expected variance of model $M_1^*$. Taking $\epsilon$ sufficiently small, firm $2$ would enter if firm $1$ chooses model $M_1^*$ but not if firm $1$ chooses model $M_1'$. Then for $\underline{u}$ sufficiently negative, it is optimal for firm $1$ to choose a model such that firm $2$ does not enter, which requires $c(M_1)>c(M_1^*)$.
\end{proof}

\end{document}